\documentclass[11pt]{article}

\usepackage[utf8]{inputenc}
\usepackage{amsfonts}
\usepackage{amssymb}
\usepackage[section]{placeins}
\usepackage{amsthm}
\usepackage{latexsym,amsmath}

\usepackage{graphicx}
\usepackage{microtype}
\usepackage[
    top=2.5cm,
    bottom=2.5cm,
    left=2.5cm,
    right=2.5cm
]{geometry}
\usepackage{threeparttable}
\usepackage[authoryear, round]{natbib}
\setcitestyle{aysep={, }}
\usepackage{placeins}
\usepackage{adjustbox}
\usepackage{booktabs}
\usepackage{multirow}
\usepackage{upgreek}
\usepackage{algorithm}
\usepackage{algpseudocode}
\usepackage{xcolor}
\usepackage{float}
\usepackage{bm}
\usepackage{makecell}
\usepackage{array}
\usepackage{authblk}

\newtheorem{theorem}{Theorem}
\newtheorem{lemma}{Lemma}

\newtheorem{prop}{Proposition}

\newtheorem{definition}{Definition}

\newtheorem{assump}{Assumption}

\usepackage[colorlinks,linkcolor=blue,citecolor=blue,urlcolor=blue]{hyperref}

\title{DASH: A Dimensionality Reduction Method for Large-scale Convex MIQP with Applications in Subset Portfolio Selection}

\author[1]{Pinzhang (Jack) Cheng\thanks{Email: \texttt{pinzhang.cheng@gmail.com}}}
\affil[1]{School of Mathematics and Statistics, UNSW Sydney}

\date{\today}

\begin{document}

\maketitle

\begin{abstract}
Subset selection problems as MIPs (Mixed Integer Programs) are NP-hard. For large scale problems, it is infeasible to find global optimal solutions in a reasonable time and good-quality incumbent solutions are sought after with MIP solvers in practice. This paper proposes DASH (Decreasing Active Set Hierarchy) -- a dimensionality reduction method that improves the MIP solver performance for a subclass of best subset selection problems that can be formulated as MIQPs (Mixed Integer Quadratic Programs). We develop and evaluate the performance of DASH in the subset portfolio selection problem with comparison to Gurobi, a commercial MIP solver. In addition to the problem size, the difficulty of a problem is related to the condition number of the covariance matrix and the box constraint on portfolio weights. An extensive set of numerical experiments with varying problem configurations shows that DASH offers consistent and significant improvement of incumbent solutions when the problem is difficult to solve by Gurobi. In particular, the magnitude and duration of improvement by DASH scale with the difficulty of the problem.
\end{abstract}

\section{Introduction}
In this paper, we propose DASH (Decreasing Active Set Hierarchy), a dimensionality reduction
method that improves the early-stage incumbent solutions generated by MIP solvers for a subclass
of subset selection problems that can be formulated as large-scale convex MIQPs (Mixed Integer
Quadratic Programs). The goal of a subset selection problem is to select an optimal subset, measured by an objective function, from all subsets that satisfies a given sparsity level. For a full set of size $p$ and a target sparsity level of $k$, there are $\binom{p}{k}$ feasible subsets. Because of this combinatorial nature, the number of feasible subsets (points) grows exponentially as the problem size increases, rendering the problem NP-hard and making it computationally infeasible to find and verify global optimal solutions in a reasonable time.

Two prominent applications are mean-variance portfolio selection problem in Markowitz theory \citep{Markowitz1952} and the quadratic error minimization problem in linear regression, with a sparsity constraint on the number of non-zero decision variables in each case. In the sparse portfolio selection problem, a sparsity constraint on the number of assets to include in the portfolio is an important extension of the standard Markowitz theory in practice since it prevents dense negligible weight allocations across the asset universe by restricting the scope to a subset portfolio. Moreover, an extensive cover of the portfolio can be regarded as passive index-tracking for portfolio managers, which is disfavored by clients when an active management fee is charged \citep{Bertsimas2022}. While the method is developed in the portfolio selection setting, it extends naturally to other convex MIQP subset selection problems like sparse feature selection in linear regression where the sparsity constraint improves model interpretability \citep{tibshirani1996regression} and portability \citep{han2015deep}.
 The only difference between the two applications is the additional aggregate-weight equality constraint and an optional box constraint on individual asset weights in the portfolio selection problem.

The standard Markowitz model with only aggregate portfolio weight constraint is fast to solve as a convex quadratic program (QP) with the weights on individual assets being the decision variable. However, in practice, additional constraints can be imposed to regulate the solution such as box constraints that restrict weights on single assets within specified lower and upper bounds, and a sparsity constraint to prevent excessive trivial allocations and the incurred transaction cost \citep{Boyd2024}. While it is easy to incorporate the box constraints that only makes the problem a constrained QP, the introduction of a sparsity constraint complicates the problem by introducing binary decision variable, $s$, of whether including an asset in the subset portfolio or not, therefore discretizing the feasible set. With the discrete sparsity constraint, $\|x\|_0=k$, being considered, the problem becomes a MIQP, which is NP-hard so that the computation cost increases exponentially as the problem size increases \citep{Floudas2025}. Based on the B\&B (Branch-And-Bound) algorithm \citep{LandDoig1960}, the main-stream commercial solvers like Gurobi and CPLEX can find the global optimal solution in theory but fail to do so in reasonable times for large-scale problems. In such scenarios, practitioners must rely on high-quality incumbent solutions under strict time budgets.

 DASH attacks the combinatorial nature of the problem by performing a systematic dimensionality reduction of the problem. It filters out a subset of the entire asset universe and therefore identifies a low-dimensional problem, on which Gurobi can be applied as usual. Since the attention is restricted to a subset of assets, the global optimal solution in the subproblem can be no better than that of the original problem. Therefore, the trade-off lies at filtering out a subset while imposing minimum impairment on the achievable global optimal solution. DASH achieves such a dimensionality reduction by working on the continuous relaxation (\ref{P1.4}) of a discrete MIQP formulation of the problem (\ref{P1.3}) where the sparsity constraint is handled by a binary decision variable $s\in \{0,1\}^p$, and so by its continuous counterpart $t\in[0,1]^p$ in the relaxed problem. Since the feasible set is a polyhedron, we use the Frank-Wolfe (FW) algorithm \citep{FW1956algorithm} to perform constrained gradient descent on the value function of the relaxed problem parametrized by $t$. We introduce an endogenous pruning threshold such that, as the FW gradient descent progresses, variables whose corresponding relaxed decision variable $t_i$ fall below this threshold are systematically discarded. This effectively guides the algorithm to focus on a subset of assets with high potential of yielding global optimal or comparable portfolio solutions.

We make two contributions in this paper. First, we develop a specific dimensionality reduction scheme, DASH for subset portfolio selection problems, which can also be applied to other subset selection problems that can be formulated as MIQP as \ref{P1.2}. We show the improvement of DASH over Gurobi in terms of both the quality of incumbent solutions and the duration of improvement for difficult problem configurations. Second, for a general subset selection problem with a sparsity requirement of $k$ and a full set $U$, $k<|U|=p$, we establish that restricting B\&B search to an active subset $S\subset U$, $k<|S|=m<p$ reduces the combinatorial feasible set from $\binom{p}{k}$ to $\binom{m}{k}$, providing a theoretical justification for dimensionality reduction as a general strategy for improving incumbent solution quality in large-scale convex MIQPs under a fixed time budget.

The paper is structured as the following: Section \ref{sec:litreview} provides a literature review; Section \ref{sec:theory} formulates the subset portfolio selection problem as a relaxed MIQP and derives its theoretical properties; Section \ref{sec:algorithm} develops the DASH algorithm and shows its applicability to other subset selection problems; Section \ref{sec:numerical} conducts the numerical experiment to compare the performance of DASH and Gurobi; Section \ref{sec:conclusion} concludes the paper.

\section{Literature Review}\label{sec:litreview}
    The Markowitz portfolio selection problem is an established theory guiding investments as it captures the fundamental idea of the trade-off between risk and return, which is characterized as a quadratic minimization problem of the difference between the variance and the mean of asset returns \citep{Markowitz1952}. While the preference of a high return is universal, investors can have different preferences of the risk involved in the investment. This can be formulated into the problem as a risk-aversion parameter that scales either the variance or the return for an effective comparison. That is, for a unit amount of risk, measured by the portfolio variance, investors with different risk preference have different prices, measured by the percentage of return, for them to take on the risk.

    A practical concern in implementing Markowitz theory is the magnitude of the condition number of the sample covariance matrix, defined as the ratio between its largest and smallest eigenvalues, reflecting the spread between the dominant systematic risk factor and the smallest idiosyncratic noise components respectively. Empirical sample covariance matrices are often ill-conditioned, reflecting a substantial dispersion in their eigenvalue spectra. Mathematically, unconstrained optimization using an ill-conditioned sample covariance matrix results in highly unstable portfolios with extreme long and short positions on individual assets. In the literature, there are two contrasting perspectives regarding the root cause of this phenomenon.

    On the one hand, the statistical perspective views this ill-conditioning primarily as a result of estimation error. Since empirical matrices are estimated from finite financial time series with $p$ assets and $T$ observations, the eigenvalue spectrum becomes systematically distorted when $p/T$ is non-negligible, where the largest eigenvalues are inflated while the smallest eigenvalues are deflated toward zero \citep{LalouxEtAl1999}. Upon inversion of the covariance matrix $\Sigma$, the artificially small eigenvalues produce excessively large eigenvectors in $\Sigma^{-1}$, which causes the optimizer to place excessive weight on the low-variance directions that are in fact dominated by estimation noise, leading to an "error maximizing" behaviours \citep{Michaud1989}. Consequently the portfolio solution is also sensitive to small changes in the input data, rendering the solution less robust and trustworthy. On the other hand, a structural perspective argues that the extreme weights are not merely statistical artifact, but rather a rational optimizer choice due to the existence of a dominant market factor, enabling efficient factor and residual risk elimination on a subset of highly correlated assets \citep{green1992when}.
    To mitigate the issues related to an ill-conditioned sample covariance matrix under the first statistical perspective, researchers have developed regularization techniques, such as shrinkage estimators, to obtain well-conditioned surrogate of the raw sample covariance matrix \citep{LedoitWolf2003, BaiShi2011,ledoit2012nonlinear}. In addition, explicit constraints (no short-selling) limiting individual asset exposure have also been shown to mathematically induce a shrinkage-like effect on the covariance matrix, effectively reducing sampling error even if the constraints themselves are empirically incorrect\citep{JagannathanMa2003}. The authors also note that the capability of the regularization methods depends on whether the reduction in sampling error outweighs the specification error introduced. However, from the structural perspective, if the matrix is well-estimated, the specification error introduced by applying biased shrinkage estimators or imposing empirically incorrect constraints prevents the optimizer from sufficiently hedging the dominant market factor.

    Ultimately, whether viewing an ill-conditioned covariance matrix as a statistical artifact that requires regularization or an inescapable structural reality of the market depends on the specific context and the available time series data. In this paper, the magnitude of the condition number matters to us from a distinct computational angle: an ill-conditioned covariance matrix possesses a wider spectrum of eigenvalues, which makes the associated MIP problem more difficult to solve. To conduct a comprehensive analysis of the performance of the proposed DASH method, in comparison with standard Gurobi, on problems with different configurations, we use factor models \citep{Ross1976,BaiShi2011} to generate synthetic covariance matrices with precise control over the level of condition numbers.

    Two constraints central to this paper are box constraints on individual asset weights and a sparsity constraint on the number of assets held in the portfolio \citep{chang2000heuristics}. As established above, box constraints with controlled shorting strike a balance between hedging flexibility and robustness against "error-maximizing" behaviour. The sparsity constraint prevents dense weight allocation across all assets as in standard Markowitz model, which is undesirable in practice since it incurs extra transaction cost and monitoring cost for portfolio managers. Such a broadly-spread portfolio can also be criticized as passive index-tracking when the portfolio is expected be actively managed by fund managers \citep{Bertsimas2022}.

    The constrained sparse portfolio selection problem transforms the standard convex QP Markowitz model into a MIQP problem, which is first formulated by \cite{Bienstock1996}. This reclassification to MIQP makes the problem exceptionally difficult to solve since the sparsity constraint fractures the convexity of the feasible set. To solve complex MIQPs, modern commercial solvers such as Gurobi and CPLEX rely on the B\&B framework \citep{LandDoig1960}, where various cutting plane techniques like Outer Approximation (OA) and Generalized Benders Decomposition (GBD) may be deployed for efficient branching strategies\citep{Floudas2025}. In a MIQP, the algorithm develops a binary search tree of feasible solutions and prune out branches incapable of containing the global optimum by comparing the lower bounds of continuous relaxations of the problem at each node with the upper bounds provided by the incumbent solutions.

    Two inherent challenges of the B\&B algorithm lie in development of effective branching strategies and the exponential growth of the feasible set as the problem size increases. Gurobi handles the former well by deploying sophisticated heuristics to find and improve incumbent solutions quickly so that good-quality solutions can be obtained in the early stages of the run \citep{gurobi}. However, the latter challenge is more innate to the algorithm as it obtains optimality certificate by exhausting the binary tree with pruning, which, as a representation of the discrete feasible set, is computationally intractable for large-scale problems. This curse of dimensionality is an inescapable reality faced by exact B\&B algorithms. When solving a large-scale MIQP problem in practice, it is often infeasible to find and verify global optimum in reasonable time since, as the computation continues, the rate and magnitude of objective improvements decay exponentially, and the solver begins spending a disproportionate amount of time on improving the lower bound to close the gap with the incumbent such that a global optimality certificate is produced. Consequently, in practical environments subject to strict latency constraints, practitioners need to predefine the time budget, settling for high-quality local solutions rather than squandering the computation budget on marginal improvement of the solution and obtaining a certificate of global optimality. Because of the computational intractability of exact B\&B algorithms, early literature worked on bypassing the NP-hardness in MIQP formulation by designing greedy metaheuristic algorithms to navigate the entire $\binom{p}{k}$ search space. \cite{chang2000heuristics} propose Genetic Algorithms, Tabu Search, and Simulated Annealing, which are fundamentally randomized local search algorithms. These metaheuristics provide local solutions computationally fast but the solution quality can be unpredictable due to its insufficient exploration of the entire feasible set.

    DASH acknowledges the NP-hardness of MIQP and
    aims at addressing the second concern by filtering out a subset of the entire asset universe on which MIP solvers can be applied as usual. By forfeiting the infeasible goal of finding and verifying the global optimal solution of a large-scale problem, we can facilitate MIP solvers in finding better incumbent solutions under a limited time budget by restricting the scope of exact B\&B search on a subproblem. Different from metaheuristics based on local search algorithms \cite{chang2000heuristics}, DASH is essentially a dimensionality reduction operator on the original asset space with comprehensive consideration of the entire search space based on the continuous relaxation \ref{P1.4} of a MIQP formulation \ref{P1.3}. The model formulation is similar to \citep{moka2025scalable}. Different from the authors' progressive choice of the auxiliary parameter $\delta$, we only set the auxiliary parameter $\delta$ to be a negligible positive number for a regularization purpose as will be explained in \ref{P1.3}. The philosophy of our approach is similar to the one adopted by \citep{FischettiLodi2003}, where the authors improve the early stage MIP solver behavior with a hierarchical scheme, where a high-level strategic branching is introduced as a metaheuristic on top of the low-level tactical branching, standard black-box MIP solver, conducted by CPLEX. The asset filtering process in our algorithm is a strategic and systematic choice of a subproblem in which assets with better attributes, in terms of their return, volatility, and covariance with the other assets, are kept such that computations performed by MIP solvers are more worthwhile without dwelling on poor-quality assets.

    The asset filtering process is achieved by performing FW gradient descent \citep{FW1956algorithm} on the value function of the relaxed problem, \ref{P1.4}. Originally proposed for QP on polytopes, the Frank-Wolfe algorithm updates the current point by taking its convex combination with a selected vertex via a linear minimization oracle, ensuring iterates remain feasible automatically within $\mathcal{T}$. Since the vertices of the polytope $\mathcal{T}$ are exactly the discrete feasible points $s\in \mathcal{S}$, each update geometrically pulls the current iterate towards a binary asset selection, producing a natural update trajectory that directly informs a pruning scheme. On the other hand, an alternative efficient gradient descent method is the projected gradient descent on $\ell_1$-balls \citep{Duchi2008}, but it can produce infeasible updates outside the feasible set $\mathcal{T}$ especially when large gradient components are present and exact line search is infeasible.

    There are two practical concerns on the FW algorithm. First, because the value function in \ref{P1.4} is generally not convex over $\mathcal{T}$, the convergence properties established for FW under convexity \citep{Jaggi2013} do not directly apply to our setting. Therefore, the theoretical justification for FW is based on its structural fitness rather than its asymptotic optimality properties. Second, the oscillatory "zig-zagging" behaviour near the boundary of $\mathcal{T}$ is a well-known issue of the standard FW algorithm that undermines its convergence performance \citep{LacosteJulien2015}. Though \cite{LacosteJulien2015} propose the away-step FW to address the second concern, we circumvent both limitations of the standard FW by design. Since the goal is not to find the continuous minimizer $t^*$ precisely, but rather to identify a reliable partition of the asset universe into discarded and reserved assets, we grant only a small but sufficient number of FW iterations well before convergence precision issues. We establish and validate the filtering process through FW gradient descent with an early-stopping criterion when we develop the algorithm.

    Although sparse subset selection problems formulated as MIQP are NP-hard, a standard alternative approach in statistical learning is to include regularization terms in the objective function such as Lasso ($\|x\|_1$), Ridge ($\|x\|_2$) or a combination of the two as an Elastic Net to force sparse solutions without explicitly imposing a sparsity constraint. The same regularization techniques are also adapted to the sparse portfolio selection problem to regulate the sample covariance matrix, tightening continuous relaxations of the problem so that more efficient branch-pruning can be achieved. The authors focus on a surrogate of the original problem, which replaces the raw sample covariance matrix $\Sigma$ with $\Sigma +\frac{1}{\gamma}I$ \citep{Bertsimas2022}, effectively a uniform diagonal perturbation to the sample covariance matrix. This replacement performs a rigid shift of the entire eigenvalue spectrum of $\Sigma$ to the right, bounding the smallest eigenvalue away from zero, reducing the condition number of $\Sigma$. As the magnitude of the regularization term increases, the diagonal perturbation forces the matrix structure towards strict diagonal dominance, yielding stable computational behaviors. However, by inflating the variances, the augmentation is artificially depressing the relative correlations between assets, leading the optimizer to treat the  assets as more independent than what the sample data indicates, potentially underestimating the true risk to which a portfolio is exposed. Ultimately, this presents a classic bias-variance trade-off in estimating the population covariance matrix. The sample covariance matrix is unbiased but ill-conditioned in large-scale problems, while structured estimators sacrifice unbiasedness in seek of improved conditioning \citep{LedoitWolf2004}.

\section{Theory}\label{sec:theory}
We first introduce the transformation from simple QP (\ref{P1.1}) to MIQP formulation (\ref{P1.2}) when a sparsity constraint is imposed. Then we develop a regulated equivalent formulation \ref{P1.3} and its continuous relaxation \ref{P1.4}. Then we derive relevant theoretical properties of \ref{P1.4}, and use them to motivate and support the development of DASH. Lastly, we provide an abstract characterization of a class of subset selection problems, providing a unified high-level perspective and showing the general applicability of DASH to other applications that can be identically formulated as \ref{P1.4}.

\subsection{A general framework}
Consider the following general constrained convex QP problem:
\begin{align*}\label{P1.1}
    \underset{x\in \mathcal{B}}{\min} &\;\; f(x)=x^\top \Sigma x - \mu^\top x \tag{P1.1}\\
    s.t.&\;\; Gx -b \preceq 0\\
    &\; \; Ax - c = 0
\end{align*}
where $\Sigma \in \mathbb{R}^{p\times p}$ is positive semi-definite (it can be further assumed to be positive definite), $\mu\in \mathbb{R}^p$, $G\in \mathbb{R}^{m\times p}$, $b\in \mathbb{R}^{m}$, $A\in \mathbb{R}^{n\times p}$, and $c\in \mathbb{R}^{n}$. We note that the domain of $x$ can be restricted onto an arbitrarily large compact set $\mathcal{B}\subset \mathbb{R}^p$ instead of $\mathbb{R}^p$ since it is infeasible to have infinite short or long position in a single asset given finite level of economic activities and borrowing capacity. Furthermore, when a solver yields a solution with excessive weight allocations, it places unconditional trust on the estimated data, leading to solutions that are sensitive to estimation error and prone to 'error-maximizing' behaviors \citep{Michaud1989}. Therefore, without loss of generality, we let $\mathcal{B}:=[-M,M]^p$ where $M$ can be chosen arbitrarily large so that only its compactness matters.
In addition, denote the feasible set as
\[X=\{x\in \mathbb{R}^p: Gx-b\preceq 0, \; Ax-c=0\}.\]
We require the inequality constraints to form box constraints that impose lower and upper bounds on the decision variable $x$. A typical set of constraints that we consider in the mean-variance portfolio setting are the box inequality constraints on the portfolio weight, $l\preceq  x\preceq u$ ($l\prec u$), and an aggregate weight equality constraint $\mathbf{1}^\top x=1$. They can be standardized as \ref{P1.1} by defining
\[G = \begin{bmatrix}
    -I\\
    I
\end{bmatrix}\in \mathbb{R}^{2p\times p}; \;\; b =\begin{bmatrix}
    -l\\
    u
\end{bmatrix}\in \mathbb{R}^{2p},\]
and
\[A = \mathbf{1}^\top \in \mathbb{R}^{1\times p};\;\; c =1\in \mathbb{R}^{1}.\]

By adding a sparsity constraint on $x$, the problem becomes a subset selection problem, which is a MIQP (Mixed-Integer Quadratic Program):
\begin{align*}\label{P1.2}
    \underset{x\in \mathcal{B}}{\min} &\;\; f(x)=x^\top\Sigma x - \mu^\top x \tag{P1.2}\\
    s.t.&\;\; Gx -b \preceq 0\\
    &\; \; Ax - c = 0\\
    &\;\; \|x\|_0 = k.
\end{align*}
To have a clear systematic analysis of \ref{P1.2}, we disentangle the sparsity attribute from the continuous decision variable $x$ and characterize it separately by introducing a binary decision variable $s\in \{0,1\}^p$, leading to the following parametrized optimization:
\begin{align*}\label{P1.3}
    \underset{s\in \{0,1\}^p}{\min}\underset{x\in \mathcal{B}}{\min} &\;\; f(s,x)=f(T_s x)=x^\top T_s \Sigma T_sx - \mu^\top T_s x \tag{P1.3}\\
    s.t.&\;\; G T_s x- b \preceq 0\\
    &\;\; AT_s x -c = 0\\
    &\;\; \|s\|_0 = \|s\|_1 = k
\end{align*}
where $T_s = diag(s)$. Denote the feasible set of $s$ as $\mathcal{S} = \{s\in \{0,1\}^p : \|s\|_1 = k\}$. For a fixed $s$, the operator $T_s$ is an orthogonal projection from $\mathbb{R}^p$ onto the subspace $\mathbb{R}^{k}(s)$ where we note the dependency of the subspace on $s$. As such, the subproblem in \ref{P1.3} is a high-dimensional embedding of \ref{P1.1} and we can write its low-dimensional equivalent as:
\begin{align*}\label{P1.3.1}
    \underset{x\in \mathcal{B}^{|s|}}{\min} &\;\; f(x_{[s]})=x_{[s]}^\top \Sigma_{[s]}x_{[s]} - \mu_{[s]}^\top x_{[s]} \tag{P1.3.1}\\
    s.t.&\;\; G_{[s]}x_{[s]} - b  \preceq 0\\
    &\;\; A_{[s]}x_{[s]}-c=0
\end{align*}
where $x_{[s]}, \mu_{[s]}\in \mathbb{R}^{k}$ are sub-vectors of $x$ and $\mu$, preserving active coordinates of $s$, i.e. $\text{supp}(s) =\{i\in \{1,\dots, p\}: s_i \neq 0\}$; $\Sigma_{[s]}\in \mathbb{R}^{k\times k}$ is the sub-matrix of $\Sigma$ without $i$th row and column for $i\in \text{supp}(s)^c$; $G_{[s]}\in \mathbb{R}^{m\times k}$, $A_{[s]} \in \mathbb{R}^{n\times k}$ are sub-matrices of $G$ and $A$ without $i$th column for $i\in \text{supp}(s)^c$, and $\mathcal{B}_{[s]}:=[-M,M]^k\in \mathbb{R}^k$ is the restriction of $\mathcal{B}$ to the active coordinates specified by $s$.

Given a fixed $s$, the subproblem \ref{P1.3.1} is the same as the basic problem \ref{P1.1}. Suppose $(s^*,x^*)$ is a global minimizer of \ref{P1.3}, it is safe to assume that $s^*$ is unique. However, $x^*$ is not unique in general due to the lack of curvature in the non-trivial null space of the projection $T_s$. As such, we modify $\Sigma$ by adding a positive semidefinite diagonal matrix $\delta (I-T_s^2)$, $\delta>0$, to make the objective function strictly convex in the subspace $\ker(T_s)$ for all $s\in \mathcal{S}$, recovering uniqueness of solution $(s^*,x^*)$ to \ref{P1.3}.

Denote the modified matrix as $\tilde{\Sigma}_s= T_s\Sigma T_s +\delta (I-T_s^2)$, and consider the continuous relaxation of discrete feasible set $\mathcal{S}$ to $\mathcal{T} = \{t\in [0,1]^p : \|t\|_1 = k\}$. The continuous relaxation of \ref{P1.3} can be rewritten as
\begin{align*}\label{P1.4}
    \underset{t\in [0,1]^p}{\min}\underset{x\in \mathcal{B}}{\min} &\;\; f(t,x)=x^\top \tilde{\Sigma}_tx - \mu^\top T_t x \tag{P1.4}\\
    s.t. & x\in X(t):=\{x\in \mathbb{R}^p: GT_tx-b\preceq 0, AT_tx-c=0\}\\
    &\;\;  \|t\|_1=\sum_{i=1}^p t_i = k,
\end{align*}
whose subproblem, given a fixed $t$, is
\begin{align*}\label{P1.4.1}
    \underset{x\in \mathcal{B}}{\min} &\;\; f(x;t)= x^\top \tilde{\Sigma}_tx - \mu^\top T_t x \tag{P1.4.1}\\
    s.t.&\;\; x\in X(t).
\end{align*}

Continuous relaxation in \ref{P1.4} convexifies $\mathcal{S}$ as $\mathcal{T}$, so that we can make uses of continuous optimization techniques. It is also worth-mentioning that the correspondence $X:\mathcal{T}\rightarrow \mathcal{B}$ shows the dependency of the feasible set $X(t)$ in \ref{P1.4.1} on the choice of $t\in \mathcal{T}$.

\subsection{Theoretical Results}
\ref{P1.4} is the backbone formulation that enables the proposed DASH method. We start by deriving necessary theoretical results that facilitate and support the method. Firstly, we specify Assumption \ref{assp1} under which the feasible set $X(t)$ is regularized. After establishing properties of the subproblem \ref{P1.4.1} as a QP for a fixed $t\in \mathcal{T}$ in Lemma \ref{lem1}, we use the Maximum Theorem\citep[pp.237]{sundaram1996first}, adapted as the Minimum Theorem for minimization problems in Theorem \ref{theo1}, to prove continuity of the solution set $(f^*,x^*,\lambda^*,\nu^*)$ in \ref{P1.4.1} as functions of $t\in \mathcal{T}$, extending the analysis to the outer problem \ref{P1.4}. Theorem \ref{theo3} derives the gradient expression of the value function $f^*$ in \ref{P1.4}, which is equivalent to an application of the Envelop theorem. Theorem \ref{theo4} further analyzes the sign and magnitude of gradient elements of $f^*$, providing a concrete justification for the development of DASH method.

\begin{assump}\label{assp1}
    Given a $t\in\mathcal{T}:=\{t\in[0,1]^p: \|t\|_1=k\}$, denote the index set of coordinates with active inequality constraints by $\mathcal{I}(x):= \{i\in \{1,\dots, p\}: l_i-t_ix_i=0\;or\;t_ix_i-u_i=0\}$
    for $x\in X(t)$. We assume that $X(t)$ is non-degenerate. That is, for all $x\in X(t)$, $\text{supp}(t)\backslash \mathcal{I}(x)\neq \emptyset$.
\end{assump}

Firstly, for any $t\in\mathcal{T}$, if $t_i=0$, then $l_i<t_ix_i<u_i$, so that $\mathcal{I}(x)\subseteq \text{supp}(t)$ and for brevity, we keep the notation $\mathcal{I}(x)$ but note its implicit dependence on $t$. For a given $t\in \mathcal{T}$, the assumption asserts that there exists $j\in \text{supp}(t)$ such that $l_j<t_jx_j<u_j$ so that both the lower and upper bounds on $x_j$ are inactive. This holds for all but a degenerate case, where an exact solution of the overdetermined linear system of active inequality constraints in $\mathcal{I}(x)$ and the aggregate-weight constraint exists. The existence relies on an artificial algebraic relation between $l$, $u$, and $t$ that is not satisfied in general. Since the lower and upper bounds on $x_i$ cannot be active simultaneously, $\mathcal{I}(x)$ is a characterization of all active inequality constraints at $x\in X(t)$. Suppose $X(t)$ is degenerate at $x$ i.e. $\mathcal{I}(x)=\text{supp}(t)$, then given $|\mathcal{I}(x)|$ equalities $l_i-t_ix_i=0$ or $t_ix_i-u_i=0$, $x_i=\frac{l_i}{t_i}$ or $x_i = \frac{u_i}{t_i}$ for $i\in \mathcal{I}(x)$ are uniquely determined by $l$ and $u$. Now consider the aggregate weight equality $A^\top T_tx-c=\sum_{i\in \text{supp}(t)}t_ix_i-1=\sum_{i\in\mathcal{I}(x)}t_ix_i-1=0$. This is only satisfied if $l$ and $u$ are chosen so that the active lower and upper bounds add up to 1. Regarding the active lower and upper bounds as a vector $b\in \mathbb{R}^{|\mathcal{I}(x)|}$, given the Lebesgue measure $\psi$ on the measure space $(\mathbb{R}^{|\mathcal{I}(x)|},\mathcal{B}(\mathbb{R}^{|\mathcal{I}(x)|}))$, we have that $\psi(\{b\in\mathbb{R}^{|\mathcal{I}(x)|}: \mathbf{1}^\top b=1\})=0$ i.e. the event of degenerate configurations has Lebesgue measure zero. As such, an arbitrarily small perturbation to the bounds will lead to immediate violation of the relation, restoring non-degeneracy. Assumption \ref{assp1} therefore excludes this measure-zero set of degenerate configurations and is satisfied in general.

\begin{lemma}\label{lem1}
    In \ref{P1.4}, for any $t\in \mathcal{T}$, $\delta\in \mathbb{R}_{++}$, we have the following results on the subproblem \ref{P1.4.1}
    \begin{enumerate}
        \item The subproblem $\ref{P1.4.1}$ is a convex optimization problem. In addition, if $\Sigma$ is positive definite or $t\prec \mathbf{1}$, component-wise, then it is a strictly convex optimization problem and the solution $x^*$ is unique for a given $t\in \mathcal{T}$, and $x^*_i = 0$ if $t_i = 0$.
        \item LICQ holds for all $x\in X(t)$.
        \item The Lagrangian of \ref{P1.4.1} is
        \[L(x,\lambda, \nu;t)=x^\top \tilde{\Sigma}_t x - \mu^\top T_t x + \lambda^\top (GT_tx-b)+ \nu ^\top (AT_t x-c)\]
        The solution of \ref{P1.4.1}, $x^*(t)$, satisfies the following KKT conditions:
        \begin{enumerate}
            \item Stationarity \[D_x L(x^*,\lambda^*,\nu^*;t)=2(x^*)^\top \tilde{\Sigma}_t - \mu^\top T_t  + (\lambda^*)^\top G T_t +(\nu^*)^\top AT_t=\mathbf{0}^\top\]
            \item Primal feasibility
            \[GT_t x^*-b\preceq 0\]
            \[AT_t x^*-c=0\]
            \item Dual feasibility
            \[ \lambda^*\succeq 0\]
            \item Complementary slackness
            \[(\lambda^*)^\top(GT_t x^*-b)=0\]
        \end{enumerate}
    \end{enumerate}
\end{lemma}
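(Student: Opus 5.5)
We treat the three parts in turn, starting with the structure of $\tilde{\Sigma}_t = T_t\Sigma T_t + \delta(I - T_t^2)$. For part 1, the feasible set $X(t)\cap\mathcal{B}$ is an intersection of a polyhedron with a box, hence convex and compact. The objective is quadratic with Hessian $2\tilde{\Sigma}_t$. Since $t\in[0,1]^p$, the matrix $I-T_t^2=\mathrm{diag}(1-t_i^2)$ is positive semidefinite. The matrix $T_t\Sigma T_t$ is a congruence of $\Sigma\succeq 0$, so it is also positive semidefinite, and hence $\tilde{\Sigma}_t\succeq 0$ and the problem is convex. For strict convexity, take $v\neq 0$ and write
\[
v^\top\tilde{\Sigma}_t v=(T_tv)^\top\Sigma(T_tv)+\delta\sum_i(1-t_i^2)v_i^2 .
\]
If $t\prec\mathbf{1}$, the second term is positive. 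If instead $\Sigma\succ 0$, then either $T_tv\neq0$, so the first term is positive, or $T_tv=0$. In the latter case $v$ is supported where $t_i=0$, where $1-t_i^2=1$, so the second term is positive. Thus $\tilde{\Sigma}_t\succ0$, and by compactness and strict convexity the minimizer $x^*$ exists and is unique, provided $X(t)\cap\mathcal{B}\neq\emptyset$; this holds for $M$ large since $\|t\|_1=k\ge1$. To show $x_i^*=0$ when $t_i=0$, note that coordinate $x_i$ then appears in no constraint (every constraint involves $T_tx$) and enters the objective only through $\delta x_i^2$, with no cross terms because row $i$ of $T_t\Sigma T_t$ vanishes. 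The problem therefore separates in $x_i$, and minimizing $\delta x_i^2$ over $[-M,M]$ gives $x_i^*=0$. Alternatively, zeroing that coordinate preserves feasibility and strictly decreases the objective unless $x_i=0$, which contradicts uniqueness.

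For part 2, we use Assumption \ref{assp1}. At $x\in X(t)$ the active constraints are the equality $AT_tx=c$, with gradient $T_t\mathbf{1}=t$, and the active bounds $i\in\mathcal{I}(x)\subseteq\mathrm{supp}(t)$, with gradients $\pm t_ie_i$. These are mutually independent because each uses a distinct coordinate and $t_i\neq0$. Moreover, $t$ is not in their span, because $t$ has a nonzero entry at some $j\in\mathrm{supp}(t)\setminus\mathcal{I}(x)$, which is nonempty by assumption. Hence the full family of active gradients is linearly independent. The box $\mathcal{B}$ we regard as non-binding for $M$ large; otherwise we add its constraints as $\pm e_i$ and note they are inactive at the optimum. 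We should make this convention explicit.

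For part 3, the Lagrangian is written down directly. Since the problem is convex with affine constraints, KKT conditions are necessary at the minimizer; LICQ from part 2 additionally guarantees uniqueness of the multipliers. Stationarity follows by differentiating: $D_x(x^\top\tilde{\Sigma}_tx)=2x^\top\tilde{\Sigma}_t$ by symmetry, and the remaining terms are linear. Primal feasibility, dual feasibility, and complementary slackness are the standard conditions.

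The main delicate point is the handling of $\mathcal{B}$ in parts 2 and 3, namely ensuring the artificial box does not bind so that the stated KKT system, which has no multipliers for $\mathcal{B}$, is correct. Our approach is to choose $M$ exceeding $\max_i\max(|l_i|,|u_i|)/\min_{t_i>0}t_i$ on the relevant region. If this bound is not uniform, we instead argue that the unconstrained-in-$\mathcal{B}$ minimizer exists by coercivity of the strictly convex objective and lies in $\mathcal{B}$ once $M$ is large.
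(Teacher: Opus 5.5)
Your proposal is correct and follows the paper's proof essentially step for step: the same split into $T_tv\neq 0$ versus $v\in\ker T_t$ for the definiteness of $\tilde{\Sigma}_t$, the same decoupling argument for $x_i^*=0$, and the same use of Assumption~\ref{assp1} to keep $t$ out of $\mathrm{span}\{e_i: i\in\mathcal{I}(x)\}$. Your further remarks, that affine constraints alone already give KKT necessity (with LICQ making the multipliers unique) and that $\mathcal{B}$ must be non-binding, are refinements the paper passes over; the one loose point is nonemptiness of $X(t)$, which depends on the bounds rather than on $M$ or $k\ge 1$ (given $l\prec 0$ it holds iff $\sum_{i\in\mathrm{supp}(t)}u_i\ge 1$; for example, if $u\succeq\frac{1}{k}\mathbf{1}$ then $x=\frac{1}{k}\mathbf{1}$ is feasible, since $\mathbf{1}^\top T_tx=\|t\|_1/k=1$ and $l_i<0\le t_i/k\le u_i$).
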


\begin{proof}
    See Appendix \ref{apex1:proof_lem1}
\end{proof}

Define the correspondence $X^*(t): \mathcal{T} \rightarrow P(\mathcal{B})$ by $X^*(t):=\underset{x\in X(t) }{\text{argmin}}f(x;t)$, which maps $t\in \mathcal{T}$ to the solution set of \ref{P1.4.1}. By Lemma \ref{lem1}, the solution  $x^*(t)$ is a singleton for every $t$ so that the correspondence is a function.

In Theorem \ref{theo1}, we rewrite the Maximum Theorem for maximization problems \citep{sundaram1996first} as the Minimum Theorem for minimization problems.
\begin{theorem}\label{theo1}
    Suppose $f$ is a continuous on $\mathcal{T} \times \mathcal{B}$ and $X: \mathcal{T} \rightarrow P(\mathcal{B})$ is a compact-valued continuous correspondence. Let $f^*:\mathcal{T} \rightarrow \mathbb{R}$ and $X^*:\mathcal{T} \rightarrow P(\mathcal{B})$ be defined by
    \[f^*(t)=\min\{f(t,x): x\in X(t)\}\]
    \[X^*(t) = \text{argmin}\{f(t,x): x\in X(t)\}=\{x\in X(t): f(t,x)=f^*(t)\}.\]
    Then $f^*$ is a continuous function on $\mathcal{T}$, and $X^*$ is a usc correspondence on $\mathcal{T}$.\\
    In addition, if $f(\theta,\cdot)$ is convex in $x$ for each $t$, and $X$ is convex-valued, then $X^*$ is a convex-valued correspondence. If $f$ is strictly convex, then $X^*$ is a single-valued upper-hemicontinuous correspondence, hence a continuous function.
\end{theorem}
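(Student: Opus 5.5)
We obtain Theorem \ref{theo1} from Berge's Maximum Theorem \citep[pp.237]{sundaram1996first} applied to $-f$, followed by two short convexity arguments. Since $\max\{-f(t,x): x\in X(t)\} = -f^*(t)$ and the argmax of $-f$ is exactly $X^*(t)$, the Maximum Theorem gives two things. First, $-f^*$, and hence $f^*$, is continuous on $\mathcal{T}$. Second, $X^*$ is nonempty, compact-valued and upper hemicontinuous. The hypotheses transfer unchanged, because $-f$ is continuous on $\mathcal{T}\times\mathcal{B}$ exactly when $f$ is, and the correspondence $X$ is the same. We will also note nonemptiness explicitly: $X(t)$ is compact and nonempty and $f(t,\cdot)$ is continuous, so the Weierstrass theorem gives a minimizer.

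For the convex-valued claim, fix $t$ and take $x_1,x_2\in X^*(t)$ and $\alpha\in[0,1]$. Convexity of $X(t)$ gives $x_\alpha=\alpha x_1+(1-\alpha)x_2\in X(t)$. Convexity of $f(t,\cdot)$ gives $f(t,x_\alpha)\le \alpha f^*(t)+(1-\alpha)f^*(t)=f^*(t)$. Minimality of $f^*(t)$ forces equality, so $x_\alpha\in X^*(t)$.

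For the strictly convex case, suppose $x_1\neq x_2$ in $X^*(t)$. Then with $\alpha=\tfrac12$, strict convexity gives $f(t,x_{1/2})<f^*(t)$ while $x_{1/2}\in X(t)$, a contradiction. Hence $X^*(t)$ is a singleton, and together with nonemptiness this defines a function $x^*(t)$. The last step shows that a single-valued usc correspondence is a continuous function. Take $t_n\to t$ and an open neighbourhood $V$ of $x^*(t)$. Upper hemicontinuity gives a neighbourhood $W$ of $t$ with $X^*(t')\subseteq V$ for all $t'\in W$. Eventually $t_n\in W$, so $x^*(t_n)\in V$. Equivalently, in the sequential form available since $\mathcal{B}$ is compact, every subsequence of $x^*(t_n)$ has a further subsequence converging to a point of $X^*(t)=\{x^*(t)\}$.

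The main obstacle is not the logic but the fit to the hypotheses, which the statement leaves implicit. Two points need attention.
\begin{itemize}
\item \emph{Which form of strict convexity.} We only need strict convexity of $f(t,\cdot)$ in $x$ for each fixed $t$, not joint strict convexity. We will read the hypothesis that way, since that is what Lemma \ref{lem1} supplies.
\item \emph{Continuity of the correspondence.} Continuity of $X(t)$ in $t$ is assumed in the theorem rather than proved. When the theorem is later applied to \ref{P1.4.1}, lower hemicontinuity of $X(t)=\{x: GT_tx\preceq b,\ AT_tx=c\}\cap\mathcal{B}$ is the delicate part. Coordinates with $t_i\to0$ make the constraints degenerate, and this is where Assumption \ref{assp1} and the LICQ from Lemma \ref{lem1} would be used.
\end{itemize}
For Theorem \ref{theo1} itself we simply take these as hypotheses.
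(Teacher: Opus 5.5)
Your proposal is correct and follows the paper's route. The paper gives no separate proof of Theorem~\ref{theo1}; it presents it as Sundaram's Maximum Theorem (with its convexity refinement) restated for minimization, which is exactly your reduction via $-f$. Your direct arguments for convex-valuedness, for single-valuedness under strict convexity of $f(t,\cdot)$, and for continuity of a single-valued uhc correspondence simply spell out what that citation covers.
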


\begin{theorem}\label{theo2}
    In \ref{P1.4}, if $\tilde{\Sigma}_t$ is positive definite, i.e. $\Sigma$ is positive definite or $t\prec \mathbf{1}$, we have the following results:
    \begin{enumerate}
        \item The minimizer $x^*(t)$ and the minimum $f^*(t)$ of \ref{P1.4.1} are continuous on $\mathcal{T}$.

        \item The optimal dual variables $\lambda^*(t)$, $\nu^*(t)$ are also continuous on $\mathcal{T}$.
    \end{enumerate}
\end{theorem}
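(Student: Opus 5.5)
Part 1 follows from the Minimum Theorem (Theorem \ref{theo1}), so the plan is to verify its hypotheses for \ref{P1.4.1}. The objective $f(t,x)=x^\top\tilde\Sigma_t x-\mu^\top T_t x$ is a polynomial in $(t,x)$, since $\tilde\Sigma_t=T_t\Sigma T_t+\delta(I-T_t^2)$ has entries polynomial in $t$. Hence $f$ is continuous on $\mathcal{T}\times\mathcal{B}$. By Lemma \ref{lem1}, $f(t,\cdot)$ is strictly convex under the stated hypothesis. Each $X(t)\cap\mathcal{B}$ is a polytope, so it is convex and compact; nonemptiness will come from Assumption \ref{assp1}.

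The substantive step is continuity of the correspondence $t\mapsto X(t)\cap\mathcal{B}$.
\begin{itemize}
\item \emph{Upper hemicontinuity} is immediate: the graph $\{(t,x): GT_tx\preceq b,\ AT_tx=c,\ x\in\mathcal{B}\}$ is closed by continuity of the constraint maps, and the range $\mathcal{B}$ is compact.
\item \emph{Lower hemicontinuity} is the main obstacle, because the constraint matrices $GT_t$, $AT_t$ depend on $t$ and lose rank where $t_i=0$.
\end{itemize}

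For lower hemicontinuity, I would use the standard result that a set defined by continuous inequality and equality constraints is lhc at $t$ if there is a Slater/MFCQ-type point: the equality gradients are linearly independent and some direction strictly decreases every active inequality. The LICQ of Lemma \ref{lem1}, which rests on Assumption \ref{assp1}, supplies this. Concretely, take $x\in X(t)$ and $t^n\to t$. Set $y_i=t_ix_i$ for $i\in\operatorname{supp}(t)$, and choose an index $j\in\operatorname{supp}(t)\setminus\mathcal{I}(x)$, which the assumption guarantees. Build $x^n$ as follows:
\begin{itemize}
\item $x^n_i=t_ix_i/t^n_i$ for $i\in\operatorname{supp}(t)$, $i\neq j$, clipped slightly toward the interior;
\item $x^n_i=0$ for indices where $t^n_i$ is near zero;
\item $x^n_j$ adjusted so that $\sum_i t^n_ix^n_i=1$.
\end{itemize}
Because $l_j<t_jx_j<u_j$ strictly, this slack absorbs the $O(\|t^n-t\|)$ correction, and $x^n\to x$ on $\operatorname{supp}(t)$. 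Off the support there is no convergence issue, since those $x_i$ are free and can be kept at their original values when $t^n_i$ is small but nonzero. Interior clipping handles points at which bounds are active. One must also keep $|x^n_i|\le M$; this uses $M$ large.

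Theorem \ref{theo1} then gives continuity of $f^*$ and single-valuedness and continuity of $x^*(t)$, which proves part 1.

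For part 2, I would use LICQ (Lemma \ref{lem1}) together with the stationarity condition
\[
T_t\big(G^\top\lambda^*+A^\top\nu^*\big)=\mu_t-2\tilde\Sigma_tx^*,\qquad \mu_t:=T_t\mu .
\]
On the active set, LICQ means the matrix $B_t$ of active constraint gradients has full column rank. The multipliers are therefore unique and given by
\[
\big((\lambda^*)_{\mathcal{I}},\nu^*\big)=(B_t^\top B_t)^{-1}B_t^\top\big(\mu_t-2\tilde\Sigma_tx^*(t)\big).
\]
I would argue continuity by the following route.
\begin{enumerate}
\item Boundedness: the multipliers are bounded near any $t_0$, since otherwise normalizing them would give a nontrivial combination of gradients equal to zero, contradicting LICQ/MFCQ in the limit.
\item Limit points: any limit point of $(\lambda^*(t^n),\nu^*(t^n))$ satisfies the KKT system at $t_0$. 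This uses continuity of $x^*$ from part 1 and closedness of the KKT conditions.
\item Uniqueness: multipliers are unique at $t_0$, so the whole sequence converges.
\end{enumerate}
The delicate point is again the degeneration of columns where $t_i\to0$. There the bound rows vanish but are inactive, and the multipliers on them are zero by complementary slackness, so no problem arises.
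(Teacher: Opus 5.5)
Your argument is correct in substance and has the same skeleton as the paper's: verify the hypotheses of the Minimum Theorem, where lower hemicontinuity of $t\mapsto X(t)$ is the real work, then recover the multipliers from stationarity plus LICQ. Both substantive steps, however, are done differently.

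For lower hemicontinuity, the paper fixes the active set $\mathcal{A}$ at $(t,x)$ and uses LICQ to get full row rank of $L_\tau=[AT_\tau;\,G_{\mathcal{A}}T_\tau]$ on a neighbourhood of $t$. It then sets $x^{(n)}=x+L_n^+r^{(n)}$, relying on continuity of the Moore--Penrose inverse under constant rank. You instead exploit the diagonal structure directly. Rescaling $x^n_i=t_ix_i/t^n_i$ on $\text{supp}(t)$ preserves the products $t_ix_i$, and hence the box constraints, exactly. Off-support coordinates stay fixed, and the $O(\|t^n-t\|)$ residual in the budget constraint is absorbed by the slack coordinate $j$ from Assumption~\ref{assp1}. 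Your route is more elementary and shows precisely where the assumption enters. The paper's correction has the advantage of not depending on the constraints acting coordinatewise.

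For part~2, the paper writes $(\nu^*,\lambda^*_{\mathcal{A}})$ as a pseudo-inverse applied to the stationarity right-hand side. With $\mathcal{A}=\mathcal{A}(t)$ fixed, that formula is valid at nearby $\tau$ only because constraints inactive at $t$ stay inactive there, which the paper shows in its last paragraph. Your bounded-multipliers / closed-KKT / uniqueness argument is the standard sensitivity argument and avoids the pseudo-inverse. Its boundedness step tacitly needs the same fact, though. Before LICQ at $t_0$ can be invoked, the normalized limit multiplier must vanish on every constraint inactive at $t_0$, not just on the degenerate rows with $t_{0,i}=0$.

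A few points to tidy.

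\emph{Nonemptiness.} Assumption~\ref{assp1} does not give nonemptiness: it only constrains points of $X(t)$ and is vacuous if $X(t)=\emptyset$. Nonemptiness must come from the data, e.g.\ $l\prec 0\prec u$, $M\ge\max_i u_i$ and $\sum_{i\in S}u_i\ge 1$ for every $k$-subset $S$. The paper is silent here too.

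\emph{Off-support coordinates.} Your bullet setting $x^n_i=0$ where $t^n_i$ is small would break $x^n\to x$ off the support. Keep those coordinates fixed, as you say afterwards.

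\emph{The box $\mathcal{B}$.} The remark ``this uses $M$ large'' does not handle points with $|x_i|=M$ for $i\in\text{supp}(t)$. These can occur once $t_i\le\max(|l_i|,u_i)/M$, however large $M$ is. For $i\neq j$, the rescaling already respects $l,u$, so the only clipping needed is to $[-M,M]$. That moves $t^n_ix^n_i$ by at most $M|t_i-t^n_i|$, which can go into the residual. If $j$ itself sits at $\pm M$, you need another coordinate with slack in the required direction. The paper's $x+\delta^{(n)}$ ignores $\mathcal{B}$ entirely, so this is a shared omission rather than a defect relative to the paper.
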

\begin{proof}
    See Appendix \ref{apex1:proof_theo2}
\end{proof}

By Lemma \ref{lem1} and Theorem \ref{theo2}, for each $t\in\mathcal{T}$ with the global solutions $x^*(t)$, $\lambda^*(t)$, and $\nu^*(t)$, we have that
\[\mathcal{L}^*(t):=L(x^*(t),\lambda^*(t),\nu^*(t);t)=f^*(t)=\underset{x\in X(t)}{\text{min}} f(x;t)=f(t,x^*(t))\]
where $\mathcal{L}^*$ is the Lagrangian $L$ evaluated at the optimum $(x^*,\lambda^*,\nu^*)$, i.e. the value function $f^*$, for all $t\in \mathcal{T}$ and $x^*(t):=\underset{x\in X(t)}{\text{argmin}}\;f(x;t)$.

We now derive the gradient expression of the value function in Theorem \ref{theo3}, which is consistent with a direct application of the Envelope Theorem. We denote $D_tf$ derivative of $f$ w.r.t. $t$ as in \cite{sundaram1996first}.

\begin{theorem}\label{theo3}
    In \ref{P1.4}, the gradient of the value function w.r.t. $t$, $\nabla_t f^*$, is
    \begin{align*}
        \nabla_t f^* = & \Big[D_tf + (\lambda^*)^\top D_t g^* +(\nu^*)^\top  D_t h^* \Big]^\top\\
    = & [2t^\top T_x(\Sigma -\delta I)T_x -\mu^\top T_x +(\lambda^*)^\top GT_x +(\nu^*)^\top AT_x]^\top
    \end{align*}
\end{theorem}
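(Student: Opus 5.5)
We will derive the gradient formula through the envelope-theorem argument. The identity $\mathcal{L}^*(t)=f^*(t)$ established after Theorem \ref{theo2} lets us differentiate the Lagrangian along the optimal path $t\mapsto(x^*(t),\lambda^*(t),\nu^*(t))$ in place of $f^*$. Write $g(t,x)=GT_tx-b$ and $h(t,x)=AT_tx-c$. By the chain rule,
\[
D_t\mathcal{L}^* = D_xL\,D_tx^* + (D_t\lambda^*)^\top g + (D_t\nu^*)^\top h + D_tf + (\lambda^*)^\top D_tg + (\nu^*)^\top D_th .
\]
Each term is evaluated at the optimum. The first term vanishes by the stationarity condition of Lemma \ref{lem1}. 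The third term vanishes because $h(t,x^*(t))=0$ by primal feasibility.

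The second term needs care. Complementary slackness gives $\lambda_j^*(t)g_j(t,x^*(t))=0$. For constraints that are strictly inactive at $t$, continuity from Theorem \ref{theo2} keeps them inactive in a neighbourhood. There $\lambda_j^*\equiv0$, so $D_t\lambda_j^*=0$. For constraints that are active at $t$, $g_j=0$, so the product $(D_t\lambda_j^*)g_j$ vanishes as well.

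This step is the main obstacle. It requires $x^*,\lambda^*,\nu^*$ to be differentiable in $t$, whereas Theorem \ref{theo2} supplies only continuity. It also needs strict complementarity, so that no index switches between active and inactive. I would first handle the generic case in which the active set $\mathcal{I}(x^*(t))$ is locally constant. There the KKT system reduces to a linear system in $(x,\lambda_{\mathcal{I}},\nu)$. Its matrix is nonsingular because $\tilde\Sigma_t\succ0$ and LICQ holds (Lemma \ref{lem1}, which rests on Assumption \ref{assp1}). The implicit function theorem then gives smoothness. At active-set changes the formula would be read as a one-sided or directional derivative, using Danskin-type arguments together with continuity of the multipliers.

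It remains to compute the explicit partials, with $x=x^*$ held fixed. Because $T_tx=T_xt$, we have $D_t(\mu^\top T_tx)=\mu^\top T_x$, $D_t(GT_tx)=GT_x$ and $D_t(AT_tx)=AT_x$. For the quadratic term,
\[
x^\top\tilde\Sigma_tx = t^\top T_x\Sigma T_xt + \delta\,x^\top x - \delta\,t^\top T_x^2t ,
\]
since $x^\top T_t^2x=\sum_i t_i^2x_i^2$. Its derivative is
\[
2t^\top T_x\Sigma T_x - 2\delta\, t^\top T_x^2 = 2t^\top T_x(\Sigma-\delta I)T_x .
\]
Substituting these partials and transposing yields the stated expression for $\nabla_t f^*$.
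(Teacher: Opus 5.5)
Your proposal is correct and follows the paper's own route. Both arguments differentiate $\mathcal{L}^*(t)=L(x^*(t),\lambda^*(t),\nu^*(t);t)$ along the optimal path, cancel the $D_tx^*$, $D_t\nu^*$ and $D_t\lambda^*$ terms by stationarity, primal feasibility, and complementary slackness plus the local-inactivity argument from Theorem~\ref{theo2}, and then compute the explicit partials using $T_tx=T_xt$. Your observation that Theorem~\ref{theo2} gives only continuity, so differentiability of $(x^*,\lambda^*,\nu^*)$ must be justified separately, is a point the paper's proof passes over, and your implicit-function argument on the KKT system with a locally constant active set supplies it. At points without strict complementarity you need not retreat to one-sided derivatives: LICQ makes the multipliers unique, so the Danskin/Gauvin-type directional derivative $D_tL(x^*,\lambda^*,\nu^*;t)\,d$ is linear in $d$, and the stated gradient formula still holds there.
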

\begin{proof}
    See Appendix \ref{apex1:proof_theo3}
\end{proof}
Given that $G = \begin{bmatrix}
        -I\\
        I
    \end{bmatrix}$, and $A = \mathbf{1}^\top $, the element-wise gradient expression of the value function is given by
    \begin{align*}
        \nabla_tf^* = \begin{bmatrix}
            2x_1^{*^2}(\sigma_{11}-\delta)t_1 +x_1^* [(\lambda_{p+1}^*-\lambda_1^*)+\nu^* -\mu_1 +2\sum_{i\neq 1}t_i\sigma_{i1}x_i^*]\\
            \vdots \\
            2x_p^{*^2}(\sigma_{pp}-\delta)t_p +x_p^* [(\lambda_{2p}^*-\lambda_p^*)+\nu^* -\mu_p +2\sum_{i\neq p}t_i\sigma_{ip} x_i^*]\\
        \end{bmatrix},
    \end{align*}
so the gradient component of asset $i$ is
    \begin{align*}
        \Big[\nabla_tf^*\Big]_i = 2x_i^{*^2}(\sigma_{ii}-\delta)t_i + x_i^*\Big[(\lambda_{p+i}^*-\lambda_{i}^*)+\nu^* -\mu_i +2 \sum_{j\neq i}t_j\sigma_{ji}x_j^*\Big].
    \end{align*}
We establish the properties of the gradient components in Theorem \ref{theo4}.
\begin{theorem}\label{theo4}
    In \ref{P1.4}, for $t\in \mathcal{T}$, the gradient components of $\nabla_tf^*$ are all non-positive. In addition, $\underset{t_i \rightarrow 0}{\lim} [\nabla_t f^*]_i=0$
\end{theorem}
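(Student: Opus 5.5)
The plan is to combine the explicit gradient formula following Theorem \ref{theo3} with the stationarity condition of Lemma \ref{lem1}. I expect the stationarity condition to make the gradient component collapse to a single, manifestly non-positive term. Write $\lambda_{p+i}^*-\lambda_i^*$ for the box multipliers of asset $i$ and use the symmetry of $\Sigma$.

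The $i$-th component of the stationarity condition $2\tilde{\Sigma}_t x^* - T_t\mu + T_tG^\top\lambda^* + T_tA^\top\nu^* = 0$ reads
\[
2t_i\sum_{j}\sigma_{ij}t_jx_j^* + 2\delta(1-t_i^2)x_i^* + t_i\big[(\lambda_{p+i}^*-\lambda_i^*)+\nu^*-\mu_i\big]=0,
\]
because $[\tilde{\Sigma}_t x]_i = t_i\sum_j\sigma_{ij}t_jx_j + \delta(1-t_i^2)x_i$.

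\textbf{Step 1: closed form for $t_i>0$.} Fix $t\in\mathcal{T}$ and an index $i$ with $t_i>0$. Multiply $[\nabla_tf^*]_i$ by $t_i$ and separate the diagonal term $2x_i^{*2}\sigma_{ii}t_i^2$ from the cross terms. This gives
\[
t_i[\nabla_tf^*]_i = x_i^*\Big\{2t_i\sum_j\sigma_{ij}t_jx_j^* + t_i\big[(\lambda_{p+i}^*-\lambda_i^*)+\nu^*-\mu_i\big]\Big\} - 2\delta t_i^2x_i^{*2}.
\]
Substituting stationarity, the braces equal $-2\delta(1-t_i^2)x_i^*$. Hence $t_i[\nabla_tf^*]_i=-2\delta x_i^{*2}$, that is,
\[
[\nabla_tf^*]_i=-\frac{2\delta\, x_i^{*2}}{t_i}\le 0 .
\]

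\textbf{Step 2: the case $t_i=0$.} By Lemma \ref{lem1}(1) we have $x_i^*=0$. Every term of the componentwise formula carries a factor $x_i^*$, so $[\nabla_tf^*]_i=0$. This establishes non-positivity for all $t\in\mathcal{T}$.

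\textbf{Step 3: the limit.} This is the only delicate step, since $x_i^{*2}/t_i$ is a priori a $0/0$ form. Rearranging stationarity gives
\[
x_i^* = -\frac{t_i}{2\delta(1-t_i^2)}\Big(2\sum_j\sigma_{ij}t_jx_j^* + (\lambda_{p+i}^*-\lambda_i^*)+\nu^*-\mu_i\Big).
\]
The bracket is uniformly bounded on a neighbourhood of any limit point: $x^*\in\mathcal{B}$ is bounded, $t\in[0,1]^p$, and $\lambda^*,\nu^*$ are continuous on the compact set $\mathcal{T}$ by Theorem \ref{theo2}. For $t_i\le 1/2$ the factor $1/(1-t_i^2)$ is also bounded. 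Therefore $|x_i^*|\le C t_i$ for some constant $C$, and then $|[\nabla_tf^*]_i|\le 2\delta C^2 t_i\to 0$ as $t_i\to 0$. This agrees with the value $0$ found in Step 2, which is consistent with the continuity asserted in Theorem \ref{theo2}.

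The main obstacle is securing the uniform bound on the multipliers, since Step 3 depends on it. I would obtain it from Theorem \ref{theo2} together with the compactness of $\mathcal{T}$. As a fallback, LICQ from Lemma \ref{lem1}(2), combined with the non-degeneracy in Assumption \ref{assp1}, gives bounded KKT multipliers directly, because they solve a full-rank linear system with bounded data.
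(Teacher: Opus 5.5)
Your proposal is correct and follows essentially the paper's proof. Stationarity collapses the $i$-th gradient component to $-2\delta (x_i^*)^2/t_i$, and the case $t_i=0$ is handled by $x_i^*=0$ from Lemma~\ref{lem1}; the limit then follows by bounding $|x_i^*/t_i|$ through stationarity, with multipliers bounded via Theorem~\ref{theo2} and compactness of $\mathcal{T}$. The differences are cosmetic: you keep the diagonal term $2\sigma_{ii}t_i x_i^*$ inside the bounded bracket (restricting to $t_i\le 1/2$) where the paper moves it into the denominator, and you read off $x_i^*\to 0$ directly from $|x_i^*|\le Ct_i$ rather than from continuity of $x^*$.
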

\begin{proof}
    See Appendix \ref{apex1:proof_theo4}
\end{proof}

As such, we have a full set of results from the uniqueness of the solution bundle $(f^*,x^*,\lambda^*,\mu^*)$ for a given $t\in \mathcal{T}$ in the subproblem \ref{P1.4.1} (Lemma \ref{lem1}) to their continuity as functions of $t\in \mathcal{T}$. Equipped with the gradient expression of the value function $f^*$, constrained gradient descent methods can be applied to solve the outer problem \ref{P1.4} iteratively. The primary claim of DASH is that the minimizer $t^*:= \underset{t\in \mathcal{T}}{\min} f^*(t)$ is indicative of good subset portfolios in terms of its distance to the feasible vertices $s\in \mathcal{S}$.

\section{Algorithm}\label{sec:algorithm}
We develop the DASH algorithm and emphasize its applicability to an entire class of subset selection problems in this section. First, we provide a financial interpretation of the terms in the gradient elements of the value function $f^*$ to motivate the gradient descent step. Then we motivate and develop the asset filtering scheme based on the geometry of $f^*$ over the polytope $\mathcal{T}$, where the combinatorial feasible points $s\in \mathcal{S}$ are vertices of $\mathcal{T}$. Theorem \ref{theo4} provides further theoretical justification for the validity of the filtering scheme in DASH and the pruning threshold mechanism.

\subsection{Desirability of Negative Gradient Elements}
We extend the rationale of Markowitz's mean-variance portfolio theory to the relaxed subset portfolio selection problem \ref{P1.4} to connect the role of negative gradient elements in decreasing the value function and the desirability of including the corresponding assets in a subset portfolio. This provides motivation and support for the gradient descent scheme in the proposed DASH method.
    In \ref{P1.4}, the negativity of $[\nabla f^*]_i$ is indicative of the desirability of asset $i$ in the portfolio. In particular, $\underset{t_i\rightarrow 0}{\lim} [\nabla_t f^*]_i = 0$, i.e. asset $i$ becomes trivial in \ref{P1.4}.
    It is clear that when solving the outer problem of \ref{P1.4}, the gradient is pointing towards the direction of steepest ascent. Therefore, to minimize the mean-variance objective function, the value function, the trajectory formed by the opposite direction of the gradient is most efficient. That is, for a large negative component of $\nabla_t f^*$, a marginal increase of the corresponding $t_i$ reduces the value function $f^*$ effectively, suggesting desirability of the corresponding asset $i$. Different from a binary vector $s\in \mathcal{S}$, which is a definitive choice of a subset portfolio, the relaxed counterpart $t\in \mathcal{T}=\text{conv}(\mathcal{S})$ distributes the attention, the sparsity level $k$ specified, to all assets in the universe. If asset $i$ has good financial attributes, it tends to receive a higher attention score $t_i$, representing a fractional choice decision. Therefore, we narrow the analysis on individual components and calibrate the idea with rationales in the subproblem \ref{P1.4.1}, the basic Markowitz portfolio selection problem.
\begin{table}[h!]
\centering
\caption{Decomposition of $\frac{\partial f^*(t)}{\partial t_i}\big|_{t_i = \theta}$}
\label{table4.1}
\renewcommand{\arraystretch}{1.5}
\begin{tabular}{p{4cm} c c m{4cm}}
\Xhline{1.5pt}
\textbf{Component} & \textbf{Term} & \textbf{Sign} & \textbf{Interpretation} \\
\hline
Idiosyncratic risk
& $2x_i^{*2}\left(\sigma_{ii} - \delta\right)\theta$
& $+$
& Introducing volatility to the portfolio \\

Box sensitivity
& $x_i^*\left(\lambda_{p+i}^* - \lambda_i^*\right)$
& $0 \text{ or } +$
& Shadow gain  \\

\text{Agg. weight sensitivity}
& $x_i^* \nu^*$
& $\pm$
& Shadow price for financing additional portfolio weight ($+$); for hedging with additional weight ($-$) \\

Return
& $-x_i^* \mu_i$
& $\pm$
& Cost of borrowing $(+)$; Gain from investment $(-)$\\

Diversification
& $2x_i^* \sum_{j\neq i} \sigma_{ij} t_j x_j^*$
& $\pm$
& Risk amplification $(+)$; Hedging $(-)$\\
\Xhline{1.5pt}
\end{tabular}
\end{table}

For an individual asset $i$, the key attributes are its mean return, variance, and its covariance with other assets. In general, an asset with a high mean return is always favorable for profitability, but we also want it to have a low variance, the downside risk in particular. On the other hand, if an asset has a low mean return and a low variance, then it is a good borrowing source, enabling additional long positions on other assets. Last but not the least, the covariance structure of the asset universe is the key reason why holding a portfolio is considered initially in achieving a balance between risk and return via diversification. There is no unilateral preference of the covariance structure of an asset with the others. An asset with low covariance with other assets is considered good when holding the same position as other assets does not amplify the exposure to common underlying market factors. In such a case, whether to include the asset or not in the portfolio lies more on its own properties in terms of the return and variance. However, such assets often do not provide good hedging of the market factor risk as the assets with high covariances. Depending on the sign of the covariance, holding the same or opposite positions on assets with high covariance can reduce the volatility of portfolio effectively.

Table \ref{table4.1} summarizes the effect of each term in the gradient. The three attributes of an asset jointly determine the magnitude of the corresponding gradient component. The first term is the idiosyncratic volatility of asset $i$ and is non-negative for $0<\delta\ll\min \{\sigma_{11},\dots,\sigma_{pp}\} $.  $\sigma_{ii}$ dominates the magnitude of the term, so that if asset $i$ has a high variance, it increases the corresponding gradient element, making the asset less favorable.

The second term comes from the box constraint on the portfolio weight. By KKT condition, it is clear that the term is zero when the corresponding box constraints are slacking, and is strictly positive when either the lower bound or the upper bound is binding. The dual variables $\lambda_{i}^*$ and $\lambda_{p+i}^*$ can be interpreted as the strength of the penalty required for the box constraints to be satisfied. Its magnitude is the shadow gain of marginally relaxing the box constraints. It is crucial to note that, a binding box constraint on an asset already suggests its desirability in that the assigned weight should have been larger had the box constraints were not imposed. This result supports an efficiency improvement of DASH by using the theoretical gradient with no box constraints instead of the numerical gradient derived in Theorem \ref{theo3}, where the only difference is the absence of the third box constraint term in the former.

Similar to the second term, the third term is also a product of the primal and dual variables. Depending on the trade-off between volatility and return specified in the objective function, which is captured by the effective returns, $\nu^*$ can be either positive or negative, reflecting how the objective value reacts to a marginal relaxation of the equality constraint. Given the fact that $\frac{\partial f^*}{\partial \epsilon}=-\nu^*$ from the perturbed problem with equality constraint $AT_t x-c=0$, a marginal relaxation of $\epsilon$ requires assigning this additional weight to any assets. If the objective function is risk-averse, low effective return, then $\nu^*<0$ and $\frac{\partial f^*}{\partial \epsilon} >0$, assigning $\epsilon$ properly increases the objective value. Conversely, if the objective function is risk-seeking, high effective return, then $\nu^* >0$ and $\frac{\partial f^*}{\partial \epsilon} <0$, assigning $\epsilon$ properly reduces the objective value. As such, if the objective function is risk-seeking, i.e. $\nu^*>0$, then $x^*_i \nu^*$ is positive(negative) if $x^*_i$ is positive(negative). That is, the assets that are being shorted ($x^*_i<0$) are attractive ($x^*_i\nu^*<0$) because they provides the additional funds for extra long positions, while the assets that are being longed are made less valuable ($x^*_i\nu^*>0$) since it encapsulates the increased exposure to idiosyncratic risk of a single asset and the opportunity cost of forfeiting other profitable investment opportunities. Since $\nu^*$ is uniform, the magnitude of the third term solely depends on the weight assigned to the assets.

The fourth term considers the individual return aspect of the assets. When longing(shorting) an asset, the term is negative (positive). It evaluates the gain from longing an asset or the cost from shorting an asset.

The last term complicates the problem by considering a weighted covariance structure of the entire asset pool. It aggregates the covariance of the asset $i$ with other assets through the modified covariance $t_j\sigma_{ij}$. When the covariance between two assets are positive(negative), holding the opposite (same) position provides hedging of systematic risk, effectively reducing the variance of the portfolio.

\begin{figure}[h]
    \centering
    \includegraphics[width=1\linewidth]{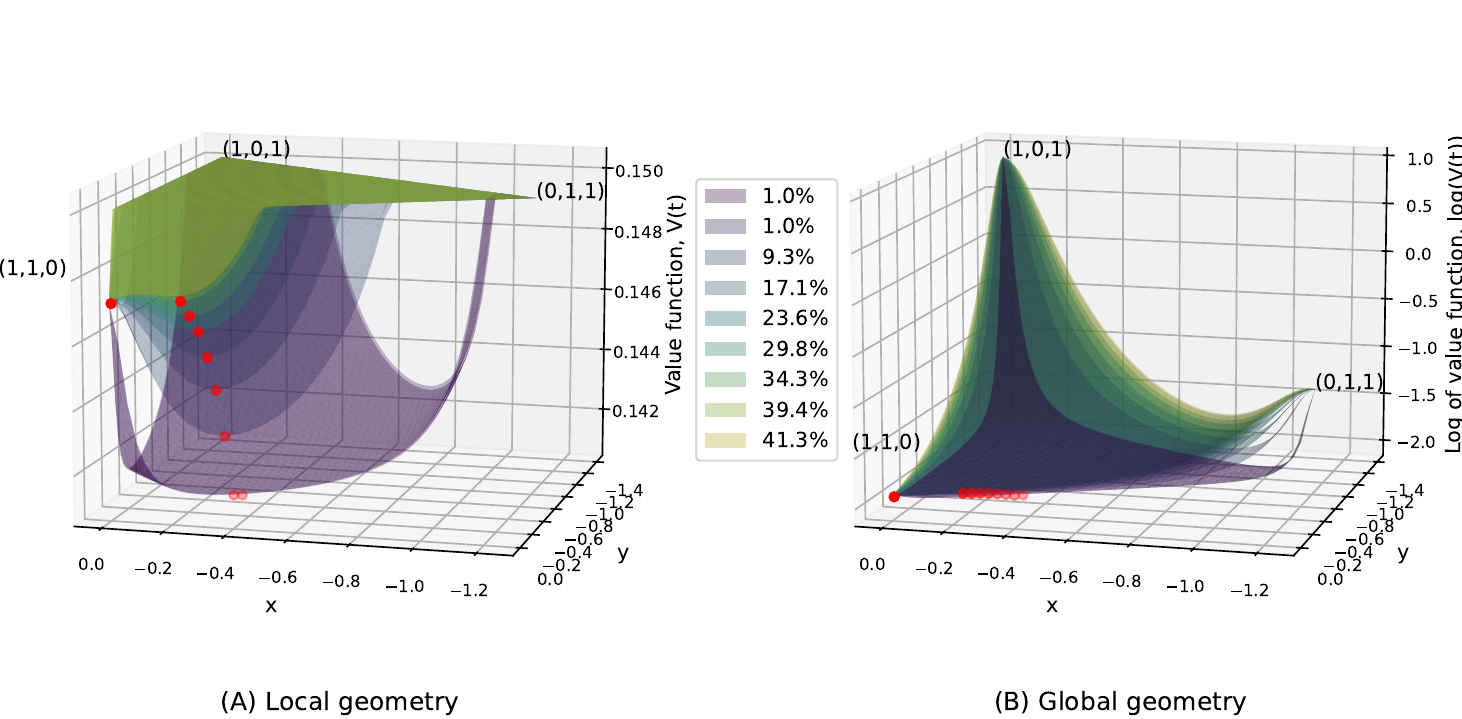}
    \caption{Evolution of Value Function And Minimizer by $\delta$}
    \label{fig4.1}
\end{figure}

The insight above provides a practical account of performing gradient descent on the value function $f^*:\mathcal{T}\rightarrow \mathbb{R}$ to find a minimizer $t^*=\underset{t\in\mathcal{T}}{\text{argmin}} f^*(t)$. Furthermore, a geometric and physical insight of the gradient descent step over $t$ is illustrated in Figure \ref{fig4.1}. The figure shows the landscape of the value function $f^*(t)=\underset{x\in X(t)}{\min} f(t,x)$ in \ref{P1.4} over $\mathcal{T}:=\{t\in [0,1]^3: \|t\|_1=2\}$ in a sample 3-choose-2 problem, where the evolution of the function landscape is corresponding to an increasing sequence of $\delta$ in $\tilde{\Sigma}_t$ as percentages of $\min\{\lambda_{\min}, \min \{diag(\Sigma)\}\}$ and the minimizer $t^*(\delta)$ is marked by a red point for each $\delta$. It is clear that at the feasible vertices, $s\in \mathcal{S}$, the function values agree with the discrete predecessor \ref{P1.3} and the original formulation \ref{P1.2}, which is invariant to the choice of $\delta$. Therefore, the feasible binary vertices serve as anchors of the function landscape, and their corresponding values are what is of interest for comparison in the original subset portfolio selection problem \ref{P1.2}. Suppose complete control is given on the feasible vertices. By raising the function value corresponds to one vertex, it is physically intuitive to infer that the 'gravitational pull' from gradient descent will push the minimizer $t^*$ away from the raised vertex. This phenomenon is illustrated in panel (B) Figure \ref{fig4.1} where the value for $(1,0,1)$ is the highest, i.e. the 2-asset subset portfolio forfeiting asset $2$ is least favored, and the minimizer $t^*$ is pushed away to the vicinity of the opposite boundary corresponding to $\{t_2=1\}\cap \mathcal{T}$, suggesting the potential of achieving better objective values by including asset $2$. Though less drastic, given the choice of asset $2$, the same phenomenon is also clear in the subproblem that determines whether to choose asset $1$ or asset $3$ as the second asset on the subset $\{t_2=1\}\cap\mathcal{T}$. Since $(0,1,1)$ is equipped with a higher value than $(1,1,0)$, the minimizer $t^*$ is closer to $(1,1,0)$ in norms and eventually slides to $(1,1,0)$ as $\delta$ increases.

\subsection{Frank-Wolfe Gradient Descent Scheme}
\begin{figure}[h]
    \centering
    \includegraphics[width=1\linewidth]{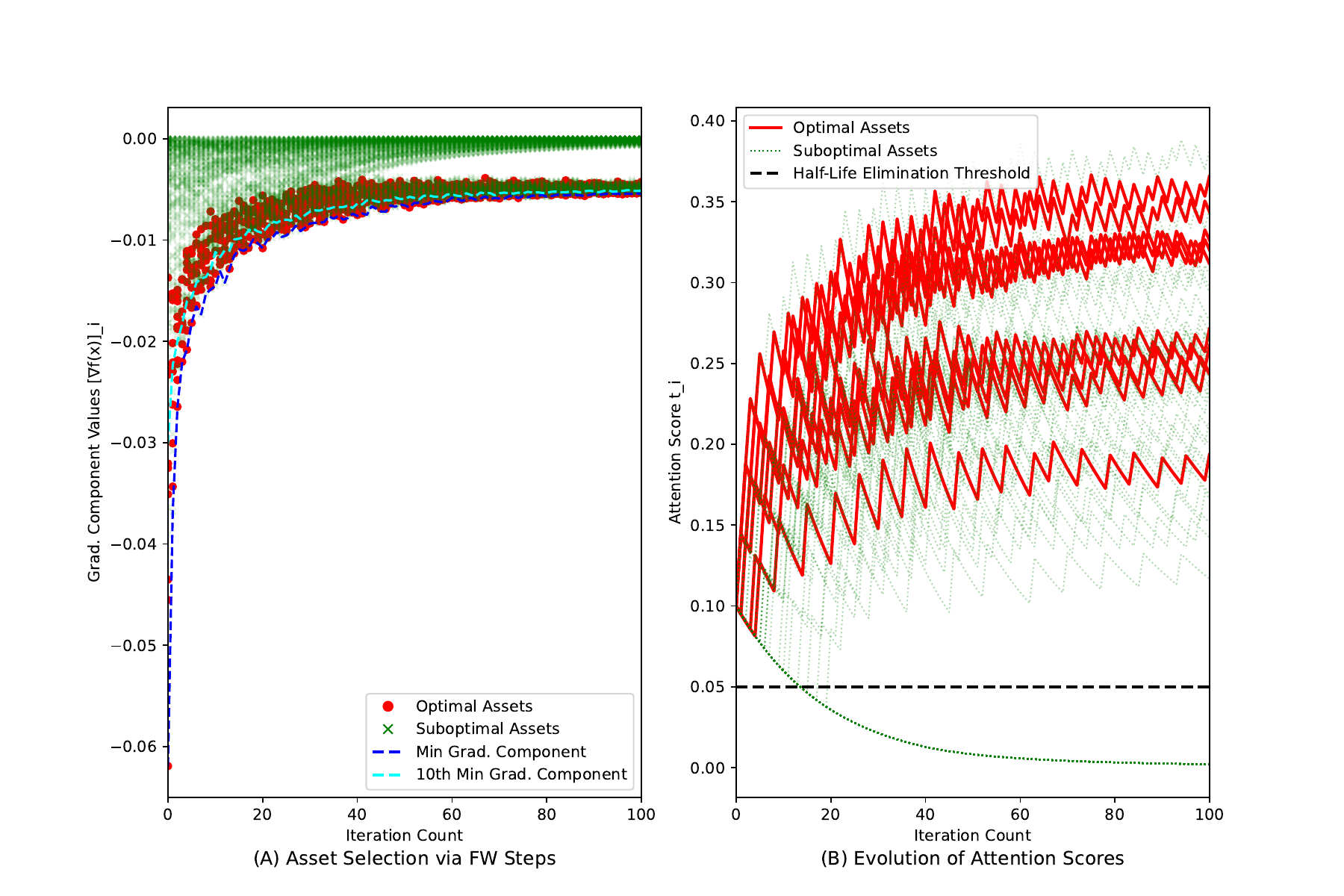}
    \caption{Attention Evolution via FW Steps}
    \vspace{2pt}
    {\raggedright \footnotesize \textbf{Note:} The Figure uses synthetic spiked covariance matrix with the following problem parameters: $p=100$, $k=10$, $l=-1$, $u=1$, $\kappa =1e1$, sample $1$ in seed $42$.\par}
    \label{fig:fw_evolution}
\end{figure}

We have shown both the financial rationale and the geometrically and physically inspired idea of the potential of identifying the global optimal or comparable subset portfolios by performing gradient descent on the value function in \ref{P1.4}. We now develop an efficient gradient descent scheme based on the Frank-Wolfe algorithm.

Since the feasible set $\mathcal{T}$ is a polytope formed by the intersection of the unit hypercube and the sparsity constraint, $[0,1]^p\cap \{\|t\|_1=k\}$, a natural choice of constrained gradient descent methods is the Frank-Wolfe (FW) algorithm. An alternative algorithm is the projected gradient method on to $\ell_1$-balls \citep{Duchi2008}, which is also efficient but requires further treatment to prevent violation of the unit hypercube constraint $t\in[0,1]^p$.

\begin{algorithm}[h]
    \caption{\text{FWF}($\Sigma,\mu, l,u,k,\delta, \alpha,\tau$)}\label{alg1}
    \begin{algorithmic}
    \Require{$\Sigma\in \mathbb{R}^{p\times p}$; $\mu,l,u\in\mathbb{R}^p$, $l\prec0\prec u$; $k\in\mathbb{N}$; $\alpha,\tau \in\mathbb{R}$}
    \Ensure{$t^*\in\mathbb{R}^{p\times p}$}

        \State $p\gets \Sigma.\text{shape}[0]$
        \State $t\gets k/p \cdot \mathbf{1}$\Comment{Initiate $t$ as the midpoint of $\mathcal{T}$}
        \State $P\gets (\Sigma,\mu, l,u,k,t,\delta)$\Comment{Initialize object $P$}
    \For{$n=1,\cdots, \text{ceil}(\log(0.5)/\alpha)$} \Comment{\text{ceil}($x$) takes the smallest integer larger than $x$}
        \State $x^*,\lambda^*, \nu^*, f^* \gets \text{solve P }$ \Comment{simple CQP (\ref{P1.4.1})}

        \State Compute $\nabla_t f^*$ \Comment{By Theorem \ref{theo3}, theoretical or numerical}
        \State $ s \gets \text{Boolean vector with 1's on the smallest k elements of}\;\nabla_t f^*$
        \Comment{FW step}
        \State $t \gets (1-\alpha) t +\alpha s$
        \State $\mathcal{K}\gets \{i\in \{1,\dots,p\}: t_i\leq \tau\}$
        \If{ $\mathcal{K}\neq \emptyset$}
        \State break
        \EndIf
        \State $P.t \gets t$
        \State update $P$ \Comment{update all quantities depending on $t$}
    \EndFor
    \State \Return{$t$}
    \end{algorithmic}
\end{algorithm}

In the convex feasible set $\mathcal{T}=\text{conv}(\mathcal{S})$, the FW method selects a vertex, $s\in \mathcal{S}$, corresponding to the $k$-minimum gradient coordinates to update the current state point $t$ to $(1-\alpha)t+ \alpha s$. The step size $\alpha$ in the original FW algorithm is specified by an exogenous rule of diminishing step size \citep{FW1956algorithm}. Since the value function is not convex in general, the exogenous diminishing step size is not guaranteed to enjoy the same convergence behaviours as in convex optimizations \cite{Jaggi2013}. In addition, the FW algorithm suffers from zig-zagging behaviours near the boundary, which weakens the overall performance. This issue can be addressed with the away-steps FW \citep{LacosteJulien2015}, but we avoid the issues altogether by choosing a small constant learning rate with an early stopping criterion since we are more interested in the distance of $t$ to all feasible vertices $s\in \mathcal{S}$ than the exact minimizer $t^*$. In the previous section, we have shown that the negativity of gradient elements of the value function suggests marginal improvement (reduction) of the objective value, and the desirability of including the corresponding asset in the optimal subset portfolio. As such, we can regard the relaxed decision variable $t\in \mathcal{T}$ as the attention towards all assets with a total attention budget specified by the sparsity level $k$. At each FW step, the algorithm checks the current gradient of the value function and picks $k$ coordinates with the most negative gradients, represented by a vertex $s\in \mathcal{S}$ with $1$'s at the corresponding coordinates, and update towards them, $t_i^{(k+1)}=(1-\alpha)t_i^{(k)}+\alpha $. This effectively raises the attention, $t_i$, on the associated assets so that they are more likely to remain active in the subproblem on which Gurobi is applied. For the unselected assets, their assigned attention undergoes an exponential decay where $t_i^{(k+1)}=(1-\alpha)t^{(k)}_i$ so that $\alpha$ also serves as a decay constant. If an asset is repeatedly ignored, it is a strong signal that it has less preferred attributes in the asset universe and it will be less likely to be selected since, by Theorem \ref{theo4}, the gradient component $[\nabla_t f^*]_i$ decays to $0$ as $t_i$ converges to $0$, suggesting no significant marginal improvement of the value function along this coordinate. Consequently, instead of tracking the minimizer $t^*$ to a high precision level, we terminate the iterations early by discarding any assets whose assigned attention value falls below half of the initial uniform attention $\frac{k}{p}$, leading to a fixed number of iterations equals to the smallest integer larger than the half-life $\frac{\ln{0.5}}{\ln (1-\alpha)}$ with decay constant $\alpha$.

We have provided an intuitive account of the FW gradient descent scheme with the half-life early termination criterion. Different from deterministic selection decisions made by $s\in \mathcal{S}$, its relaxed counterpart $t\in \mathcal{T}$ makes fractional decisions, allowing refinement and confirmation iteratively in FWF. Figure \ref{fig:fw_evolution} illustrates this result in a synthetic 100-choose-10 problem generated from a factor model that will be introduced in the Numerical Result section. The figure documents the gradient components and the assigned attention of each assets in Panel (A) and (B) respectively where the 10 assets in the global optimal subset portfolio are colored in red (identified exactly by Gurobi) and other suboptimal assets are colored in green. There are two distinct groups of assets: the first group is discarded assets whose associated gradient components converging to zero and the attention on them decays towards zero as the iteration continues; the second group is reserved assets whose gradient components converges to a uniform negative value and their assigned attention converges to nonzero values that add up to the total attention level $k$ with zero attention on the discarded assets at the limit.

Panel (B) demonstrates that the early iterations in the FW algorithm dominate the categorization of assets into desired and undesired assets since the gradient components of the rarely selected assets in the initial iterations converges to $0$ as their assigned attention decays towards zero. The half-life of decay constant $\alpha$ serves as a clear division of the distinct convergent behaviours of the two asset groups. Appendix \ref{apex2:FWevo} also showcases this categorization scheme when the sample covariance matrix is not well-conditioned, at the level of $\kappa= 1e3$ and $\kappa =1e6$ respectively. Though the scheme does not correctly identify the 10 global optimal assets as desired assets to keep, it does not present significant impairment of the solution quality due to the existence of comparable solution portfolios formed with other suboptimal assets in the reserved group.

We now proceed to provide a pseudo-code of this gradient descent scheme in Algorithm \ref{alg1}, which we refer to as FWF (Frank-Wolfe Filter). In the pseudo-code, to simplify the notations, we use the object $P$ to store all the problem-specific quantities in program \ref{P1.4}. That is $P(\Sigma,\mu, l,u,k,t,\delta)$.  Solving $P$ means solving a subproblem \ref{P1.4.1}, a convex quadratic program parametrized by $t$.

FWF is the main part of the DASH method where a subset of assets is selected, on which Gurobi, or other B\&B solvers, can be applied as usual. By pruning the less preferred assets given a $k$-sparsity constraint - bounded total attention, Gurobi only develops a binary tree on this reserved subset, significantly reducing the complexity by suppressing the combinatorial nature to a lower dimensional problem. In addition, while Theorem \ref{theo3} gives the exact gradient expression of the value function, evaluating it requires first solving a QP subproblem \ref{P1.4.1} at the given $t\in \mathcal{T}$ to determine the active inequality constraints, which is computationally costly in the iteration scheme. To improve the efficiency, we temporarily ignore the box inequality constraints, making the problem an equality constrained parametrized minimization, which admits a closed-form gradient expression  and allows us to skip solving QPs numerically. This substitution does not introduce significant approximation error when a loose box constraint is imposed such that most of the bounds are slacking. On the other hand, if a tight box constraint is present, the conflict of assets not realizing the same efficacy with and without the box constraints is also of limited concern since FWF selects assets with inherently favourable financial attributes and the second-stage Gurobi solves the full constrained problem exactly. The fundamental trade-off is between the filtering accuracy and the filtering speed. We prioritise the latter since the substitution achieves a consistent $5-6\times$ speedup over numerically solving QPs iteratively (Appendix \ref{apex2:fwf}), and the second-stage Gurobi solves the full constrained problem exactly.

Equipped with FWF, DASH is simply as a two-stage algorithm where it firstly selects a subproblem via FWF, then applies Gurobi as usual. We provide a pseudocode of the DASH method in Algorithm \ref{alg2}.

\begin{algorithm}[h]
    \caption{DASH($\Sigma,\mu, l,u,k,\alpha$)}\label{alg2}
    \begin{algorithmic}
    \Require{$\Sigma\in \mathbb{R}^{p\times p}$; $\mu,l,u\in\mathbb{R}^p$, $l\prec0\prec u$; $k\in\mathbb{N}$; $\alpha,\tau \in\mathbb{R}$}
    \Ensure{$s^*\in \mathbb{R}^p$; $f^*\in\mathbb{R}$}

        \State $\mathcal{S}\gets \{1,2,\dots,p\}$
        \State $\mathcal{K} \gets \{\}$

    \For{$i\in \{1,2\}$} \Comment{primary reduction achieved in the first iteration}
        \State $p\gets \Sigma.\text{shape}[0]$
        \State $t\gets k/p\cdot \mathbf{1}$
        \State $\delta \gets
        0.01\min \{\lambda_{\min}(\Sigma),\min\{\text{diag}(\Sigma)\}\}$
        \State $\tau \gets 0.5k/p$ \Comment{half-life elimination threshold}
        \State $P \gets (\Sigma, \mu,l,u,k,t, \delta)$
        \State $t\gets \text{FWF}(\Sigma,\mu, l,u,k,\delta ,\alpha,\tau )$
        \State $\mathcal{K}\gets \{i\in \{1,\dots,p\}:t_i\leq \tau \}$
        \State $\mathcal{S}\gets S\backslash\mathcal{K}$
        \State $\Sigma,\mu,l,u \gets \Sigma(\mathcal{S}),\mu(\mathcal{S}),l(\mathcal{S}),u(\mathcal{S})$ \Comment{extract sub-objects}
    \EndFor

    \State $s,f^*\gets \text{Gurobi}(P)$
    \Comment{solve $P$ as a MIP as formulated in \ref{P1.3}}
    \State $s^*\gets \mathcal{S}(s)$\Comment{identify the binary vector $s^*$ in the original space}
    \State \Return{$s^*,f^*$}
    \end{algorithmic}
\end{algorithm}

\subsection{Abstraction and Generalization to Other Applications}
The proposed DASH method is developed on sparse portfolio selection, but the underlying structure is not unique to finance. Whenever a practical problem reduces to selecting a fixed-size subset from a larger universe of candidates and the quality of each subset can be evaluated by solving a QP as \ref{P1.3.1}, the problem admits the MIQP formulation as \ref{P1.3} and the same relaxation \ref{P1.4}, gradient analysis, and filtering logic applies. To make this generalization precise, we introduce an abstract characterization of this class of problems, reemphasize the difficulty of exponentially growing discrete feasible points, and position DASH as a general metaheuristic for reducing the combinatorial search space prior to applying B\&B algorithms.

\begin{definition}[Subset Selection Space]
    A subset selection space is a tuple $(U,f^*)$ where:
    \begin{enumerate}
        \item $U=\{1,\dots, p\}$ is the universe with $p$ candidates,
        \item Each subset $s\in \{0,1\}^p\backslash \{\mathbf{0}\}$ identifies a candidate subset of $U$, and the collection of all $k$-sparse subsets is
        \[\mathcal{P}^{(k)}=\{s\in\{0,1\}^p\backslash\{\mathbf{0}\}:\|s\|_1=k,\;\;k=1,\dots ,p\}\]
        with $\mathcal{P}=\cup_{k=1}^p \mathcal{P}^{(k)}$ denoting the collection of all non-trivial subsets
        \item $f^*:\mathcal{P}\rightarrow \mathbb{R}$ is the objective function evaluating the quality of each subset.
    \end{enumerate}
\end{definition}

As such, the $k$-sparse subset selection problem is to find the best subset $s^*=\underset{s\in \mathcal{P}^{(k)}}{\text{argmin}}\;f^*(s)$. For a fixed $k$-sparse subset in $\mathcal{P}^{(k)}$, the computation of $f^*(s)$ is easy as the convex QP subproblem \ref{P1.3.1} over the continuous decision variables on the selected candidates, which has a unique solution under the conditions of Lemma \ref{lem1}. The cardinality of the collection of all $k$-sparse subsets is $|\mathcal{P}^{(k)}|= \binom{p}{k}$, $1\leq k\leq p$. This combinatorial nature of the feasible set, $\mathcal{P}^{(k)}$, is the source of NP-hardness of subset selection problems. Though B\&B algorithms can rule out poor feasible points in chunks by pruning infeasible branches while developing a binary decision tree, the exponential growth of the tree remains an inescapable obstacle for efficiency in improving the quality of early stage solutions.

DASH shifts the attention from looking for $k$ candidates in the full candidate universe $U$ to discarding candidates that are least likely to be in the best subset. To formalize the idea, we define the selection process as filters:
\begin{definition}[Filter]
    Given a subset selection space $(U,f^*)$, a filter $\mathcal{F}: 2^{U}\rightarrow 2^U$ is a map such that $\forall S\in 2^U$, $\mathcal{F}(S)\subseteq S$
\end{definition}
In essence, DASH implements a filter $\mathcal{F}$ to the entire candidate universe $U$ to restrict the scope of the problem on a subset $\mathcal{F}(U)$ before applying a B\&B based algorithm. The main challenge faced by DASH, and filters in general, is the trade-off between the improvement of incumbent solutions with limited time budget, and potential exclusion of good candidates from $\mathcal{F}(U)$. We characterize this trade-off through a preference relation on $\mathcal{P}$:
\begin{definition}[Preference]
    Given a subset selection space $(U,f^*)$, the preference of all subsets in $\mathcal{P}$ is a binary relation $R$ on $\mathcal{P}$ such that, for any $a,b\in \mathcal{P}$, $aR b$ if and only if $f^*(a)\leq f^*(b)$
\end{definition}

\begin{lemma}\label{lem2}
    Given a subset selection space $(U,f^*)$, we have the following results
    \begin{enumerate}
        \item R is a total preorder on $\mathcal{P}$.
        \item If $\forall a,b\in \mathcal{P}$ $a\neq b\implies f^*(a)\neq f^*(b)$, then $(\mathcal{P},R)$ is totally ordered.
    \end{enumerate}
\end{lemma}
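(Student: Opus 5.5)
The plan is to pull everything back to the usual order $\leq$ on $\mathbb{R}$ through the map $f^*$. By the definition of preference, $aRb$ holds exactly when $f^*(a)\leq f^*(b)$. So $R$ is the preimage of the total order $(\mathbb{R},\leq)$ under $f^*:\mathcal{P}\rightarrow\mathbb{R}$, and each order axiom for $R$ follows from the matching axiom for $\leq$ on real numbers.

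For part 1, I check the two defining properties of a total preorder.

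\emph{Transitivity.} If $aRb$ and $bRc$, then $f^*(a)\leq f^*(b)\leq f^*(c)$. Hence $f^*(a)\leq f^*(c)$, that is, $aRc$.

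\emph{Totality.} For any $a,b\in\mathcal{P}$, the reals $f^*(a)$ and $f^*(b)$ are comparable. So $f^*(a)\leq f^*(b)$ or $f^*(b)\leq f^*(a)$, which gives $aRb$ or $bRa$. Taking $a=b$ here gives reflexivity, so $R$ is a total preorder.

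The only care needed is that $f^*$ is a genuine real-valued function on all of $\mathcal{P}$, as the definition of a subset selection space requires. This is what makes every pair comparable.

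For part 2, it remains to show antisymmetry. Suppose $aRb$ and $bRa$. Then $f^*(a)\leq f^*(b)$ and $f^*(b)\leq f^*(a)$, so $f^*(a)=f^*(b)$ by antisymmetry of $\leq$ on $\mathbb{R}$. The hypothesis, read in contrapositive form, says $f^*(a)=f^*(b)\implies a=b$, that is, $f^*$ is injective. Hence $a=b$. Together with part 1, this makes $R$ reflexive, transitive, antisymmetric and total, so $(\mathcal{P},R)$ is totally ordered.

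There is no real obstacle here. The step that needs the most attention is part 2, where injectivity of $f^*$ is used to upgrade the preorder to an order. Without that hypothesis, distinct subsets with equal objective values would violate antisymmetry, and that is exactly why part 1 only yields a preorder.
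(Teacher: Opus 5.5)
Your proof is correct and follows essentially the same route as the paper. Transitivity and strong connectedness of $R$ are inherited from $\leq$ on $\mathbb{R}$ through $f^*$, with reflexivity obtained from connectedness. In part 2, $aRb$ and $bRa$ force $f^*(a)=f^*(b)$, and the injectivity hypothesis turns this into $a=b$, which gives antisymmetry.
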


\begin{proof}
    See Appendix \ref{apex1:proof_lem2}.
\end{proof}
In particular, since $\mathcal{P}^{(k)}\subseteq \mathcal{P}$, Lemma \ref{lem2} also applies to $\mathcal{P}^{(k)}$. The additional assumption in the lemma is satisfied in general as distinct subsets will almost surely yield different objective values. As such, the total order of R on $\mathcal{P}^{(k)}$ is simply restating the existence and uniqueness of a best subset in $\mathcal{P}^{(k)}$.

The formulation \ref{P1.4} provides a natural abstract template. A subset selection problem $(U,f^*)$ falls within the scope of DASH if it satisfies three structural conditions: (i) the subproblem for a fixed subset $s\in \mathcal{P}^{(k)}$ is a convex QP; (ii) the feasibility constraints are linear with box constraints ensuring LICQ; and (iii) the value function $f^*$ is differentiable in the relaxed attention variable $t\in \mathcal{T}$, so that the gradient expression in Theorem \ref{theo3} is well-defined. These conditions are jointly sufficient for the FWF filtering stage to operate without modification.

A canonical application apart from sparse portfolio selection is the subset feature selection in linear regression \citep{Hazimeh2020}, the $\ell_0$-constrained least squares problem
\[\min_\beta \|y-X\beta\|_2^2 \quad \text{subject to} \quad \|\beta\|_0=k.\]
This maps directly onto \ref{P1.4} by substituting $\Sigma \gets X^\top X$, and $\mu\gets X^\top y$, with optional box constraints on $\beta$. Two advantages of DASH carry over verbatim: the solution objective values of all candidate subset features remain intact, and the target sparsity level $k$ is controlled directly rather than implicitly through regularization terms like Lasso or Ridge regression.

\section{Numerical Results}\label{sec:numerical}
In this section, we first examine the Gurobi performance under different problem configurations to understand the relation between the problem difficulty and the related problem parameters. Using the factor model to generate synthetic problems with targeted parameters, a set of numerical experiments is then conducted to compare the performance of DASH and Gurobi. Throughout the comparison, we fix the number of threads used to be 1 for a fair comparison. It is important to note that the numerical results can be extended to multi-thread settings since DASH also runs Gurobi after a small proportion of filtering stage at the start.

\subsection{Synthetic Data Based On Factor Models}
We use a factor model \citep{Ross1976,BaiShi2011} to generate synthetic covariance matrices so that we have direct control over the condition number of the synthetic covariance matrices.

The return vector in a factor model can be described by some common market factors,
\[\mathbf{R} = B f + \epsilon \]
where $B\in \mathbb{R}^{n\times r}$ $(r<p)$, and we assume $f\sim N(0, I_r)$ is the vector of $r$ independent factors and a uniform white noise $\epsilon\in \mathcal{N}(0,\sigma^2 I_r)$ so that we have direct control of the condition number of the generated covariance matrices. Assuming that the factors are uncorrelated with the noise term, we can generate synthetic covariance matrices via \citep{bai2002determining}
\[\Sigma = \mathbb{E}[\mathbf{R} \mathbf{R}^\top]= BB^\top + \sigma^2 I,\]
where $BB^\top = U\Lambda U^\top$ given the singular value decomposition of $B$, $B=U\Lambda^{\frac{1}{2}}V$. After specifying a white noise level $\sigma^2$, i.e. the smallest eigenvalue, and the desired level of condition number of the synthetic covariance matrix, $\kappa$, we can decide the largest eigenvalue based on $cond(\Sigma)= \frac{\lambda_{\max} }{\lambda_{\min}}=\kappa$. The eigenvalues associated with the other factors are generated through a uniform distribution $U(0.5 \lambda_{\max}, \lambda_{\max})$, where $0.5 \lambda_{\max}$ is a discretionary choice of the floor to distinguish eigenvalues corresponding to the market factors from the noises in the market.

Since the expected return vector in the factor model is $0$, $\mathbb{E}[\mathbf{R}]=0$, we generate a non-trivial return vector by assuming a uniform risk premium, $\beta$, for simplicity, so that the synthetic return vector can be computed as
\[\mu = \beta diag(\Sigma) + r_{\epsilon}\]
where $diag(\Sigma)\in \mathbb{R}^p$ is the diagonal vector of asset variances and $r_\epsilon$ is a vector of independent return noises proportional to the asset variances, $[r_\epsilon]_i \in \mathcal{N}(0,0.05\Sigma_{ii})$.

\subsection{Gurobi Performance Under Different Problem Configurations}
\begin{figure}[h]
    \centering
    \makebox[\textwidth][c]{\includegraphics[scale=0.45]{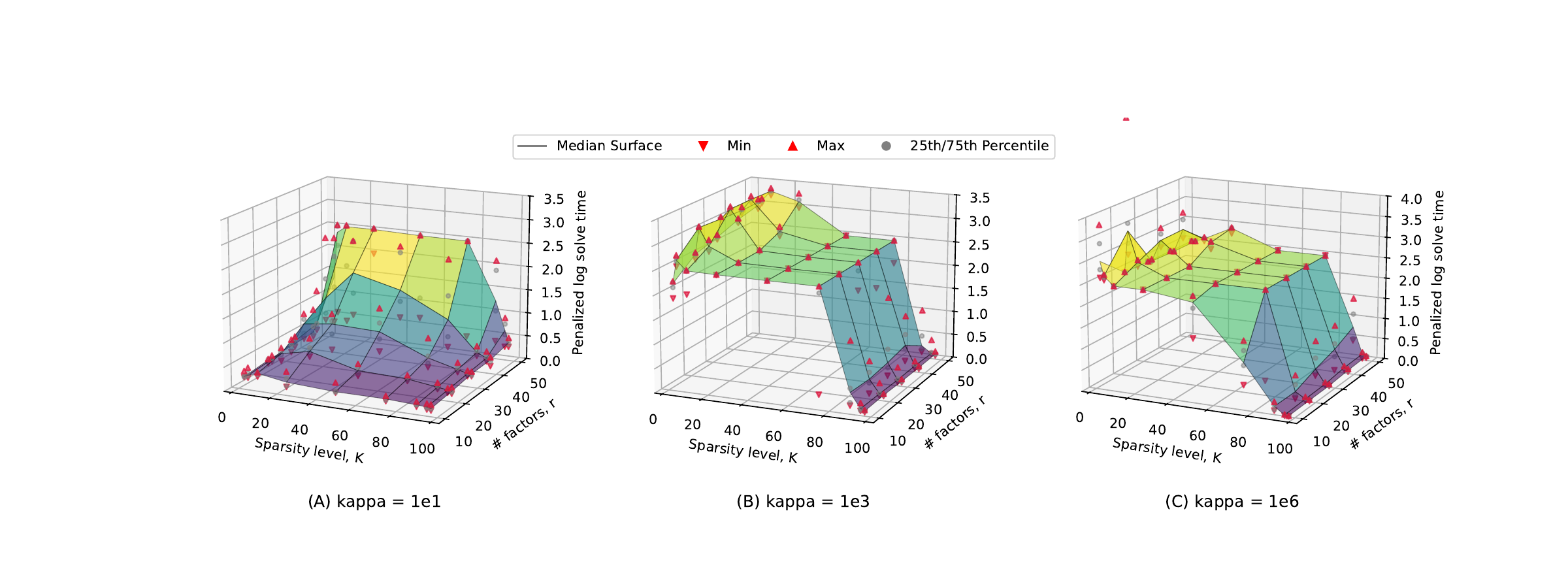}}
    \caption{Difficulty of A Spectrum of Problem Configurations, $\kappa$, $K$, $r$}
    \label{fig:gurobi_results}
    \raggedright
    \footnotesize\textit{Note: For each parameter bundle $(\kappa,k,r)$, 10 sample problems are solved by Gurobi with a time-budget of 300s. The performance is measured by PST (Penalized Solve Time), $PST = time\times (scaler+1)$ where $scaler = \frac{\text{incumbent obj. val. - lower bound}}{\max\{|\text{incumbent obj. val.}|,1\}}$}
\end{figure}

The performance of Gurobi differs significantly in different problem configurations, suggesting the varying difficulty of the problems. It is clear that the solve time grows exponentially with the problem size ($p$). Other problem attributes of interest that affect the problem difficulty and therefore, the performance of Gurobi are the condition number of the covariance matrix $(\kappa)$, the sparsity level ($k$, relative to $p$), and the number of factors used to generate the synthetic covariance matrix. In Figure \ref{fig:gurobi_results}, we fix $p=100$ and estimate the performance of GUROBI with different configurations of $\kappa$, $k$, and $r$. We consider the grid spaces:
\[k\in [3,5,10,25,50,75,90,95,97],\]
\[\kappa \in [1e1, 1e3,1e6],\]
\[r\in [10,20,30,40,50].\]

The condition number of the covariance matrix influences the range of the landscape of the value function in that the spectrum of eigenvalues of an ill-conditioned covariance matrix is wider than that of a well-conditioned matrix, leading to more dispersion of the feasible solutions so that branch-pruning is less efficient in B\&B and the binary tree tend to be developed in full.

The sparsity constraint contributes to the complexity of the problem by determining the size of the search space, the number of feasible points. A naive hypothesis is that the complexity, in terms of solve time, is highest and symmetric at $k=\frac{p}{2}$. However, based on Figure \ref{fig:gurobi_results}, the difficulty rises as the sparsity constraint decreases further before the combinatorial number of the associated search space reduces to $p$ at $k=1$. This provides a signal that the difficulty of an MIP is asymmetric in the sparsity level, so that it is easier to discard a few least preferred assets than to pick a non-trivial subset of the most preferred assets. This asymmetric characteristic provides additional justification of the appropriateness of DASH in discarding undesired assets to focus on a subset of assets.

Lastly, the number of factors in the factor model affects the complexity of the value function landscape by increasing the rank of the synthetic covariance matrix prior to adding a uniform noise. The covariance matrix has more complicated structure when generated from more factors. As shown in Panel (A), there is a clear monotonic trend of increasing problem complexity with the number of factors in well-conditioned covariance matrices, $\kappa = 1e1$. For moderately and severely ill-conditioned matrices in Panel (B) and (C), the trend is only marginal or none, suggesting the more dominant role played by the condition number in affecting the problem complexity over the number of factors. As such, we set the number of factors to be $10\%$ of the problem size in the following numerical experiments, which shows good Gurobi performance in well-conditioned covariance matrices $(\kappa = 1e1)$ and ensures self-similarity of the covariance structure across problem sizes, preventing unaccounted impact of inconsistent factor-to-asset ratio.

We have identified that the condition number and the sparsity level are of dominant impact on the difficulty of the problem. As such, the performance comparison will be conducted by varying these parameters for robustness.

\subsection{Performance Comparison Between DASH and Gurobi}
\vspace{-5pt}
\begin{table}[h!]
\centering
\footnotesize
\setlength{\tabcolsep}{3pt}
\renewcommand{\arraystretch}{1.3}
\caption{Performance Comparison: $\kappa=1e6$, $[l,u]=[-0.3,1]$}
\label{table:5.1}
\begin{adjustbox}{max width = \textwidth}
\begin{threeparttable}
\begin{tabular}{ccrrrrrrrrrrrr}
\toprule
 &  & \multicolumn{3}{c}{Time (60s)} & \multicolumn{3}{c}{Time (180s)} & \multicolumn{3}{c}{Time (300s)} & \multicolumn{3}{c}{Time (600s)} \\
p & k & DASH & Gurobi & Win\% & DASH & Gurobi & Win\% & DASH & Gurobi & Win\% & DASH & Gurobi & Win\% \\
\midrule
\multirow{3}{*}{100} & 3 (3.0\%) & -0.6993 & -1.1311 & 0.0\% & -0.6993 & -1.1311 & 0.0\% & -0.6993 & -1.1311 & 0.0\% & -0.6993 & -1.1311 & 0.0\% \\
 & 5 (5.0\%) & -1.4599 & -1.6943 & 0.0\% & -1.4599 & -1.6943 & 0.0\% & -1.4599 & -1.6943 & 0.0\% & -1.4599 & -1.7008 & 0.0\% \\
 & 10 (10.0\%) & -2.9386 & -3.0002 & 10.0\% & -2.9386 & -3.0002 & 10.0\% & -2.9386 & -3.0002 & 10.0\% & -2.9386 & -3.0002 & 10.0\% \\
\hline
\multirow{3}{*}{500} & 5 (1.0\%) & -0.7632 & -0.9081 & 0.0\% & -0.7632 & -0.9186 & 0.0\% & -0.7632 & -0.9327 & 0.0\% & -0.7632 & -0.9585 & 0.0\% \\
 & 25 (5.0\%) & -3.3232 & -3.4314 & 10.0\% & -3.3471 & -3.4855 & 10.0\% & -3.3534 & -3.5005 & 10.0\% & -3.3565 & -3.5089 & 10.0\% \\
 & 50 (10.0\%) & -7.2344 & -7.2381 & 30.0\% & -7.2415 & -7.2701 & 30.0\% & -7.2415 & -7.2758 & 20.0\% & -7.2573 & -7.2927 & 40.0\% \\
\hline
\multirow{3}{*}{1500} & 15 (1.0\%) & -1.2448 & -1.2709 & 10.0\% & -1.2452 & -1.2741 & 0.0\% & -1.2452 & -1.2776 & 0.0\% & -1.2463 & -1.2787 & 0.0\% \\
 & 75 (5.0\%) & -6.6612 & -6.5598 & 60.0\% & -6.6910 & -6.9133 & 30.0\% & -6.6910 & -6.9909 & 30.0\% & -6.6966 & -7.0469 & 20.0\% \\
 & 150 (10.0\%) & -16.5576 & -15.9982 & 100.0\% & -16.7574 & -16.4987 & 80.0\% & -16.7688 & -16.6516 & 70.0\% & -16.8196 & -16.7361 & 60.0\% \\
\hline
\multirow{3}{*}{3000} & 30 (1.0\%) & -1.4552 & $6.6122^\dagger$ & 100.0\% & -1.4599 & -1.4348 & 80.0\% & -1.4599 & -1.4579 & 50.0\% & -1.4657 & -1.4671 & 50.0\% \\
 & 150 (5.0\%) & -10.1881 & $6.6122^\dagger$ & 100.0\% & -11.2634 & -6.6797 & 100.0\% & -11.3691 & -10.0102 & 100.0\% & -11.4288 & -10.8723 & 100.0\% \\
 & 300 (10.0\%) & -17.1251 & $6.6122^\dagger$ & 100.0\% & -28.3631 & -13.5538 & 100.0\% & -29.1356 & -27.4673 & 100.0\% & -29.6070 & -28.8596 & 90.0\% \\
\bottomrule
\end{tabular}
\begin{tablenotes}
\footnotesize
\item[*] DASH and Gurobi values report the average incumbent objective across 10 synthetic problems.
Win\% is the percentage of problems on which DASH achieves a lower or equal incumbent objective value than Gurobi at the given time budget.
\item[$\dagger$] Gurobi reports the initial heuristic solutions with no better incumbent solutions. All such instances are counted as wins for DASH.
\end{tablenotes}
\end{threeparttable}
\end{adjustbox}
\end{table}
\vspace{-5pt}
Table \ref{table:5.1} shows a comprehensive performance examination of DASH and Gurobi in terms of the solution quality over time across different problem sizes and sparsity levels by comparing the average solution quality and the win rate of DASH over Gurobi. The algorithms are run on 10 synthetic problems in each problem size $p\in [100,500,1500,3000]$, and we consider three levels of sparsity $k\in[\min\{3,1\%p\}, 5\% p, 10\%p\}]$ and a time budget $T\in [60, 180, 300, 600]$ in seconds. We choose $\kappa =1e6$ and $[l,u]=[-0.3, 1]$ to mimic the situation where the sample covariance matrix is ill-conditioned and a tight box constraint is imposed to prevent extreme weights on single assets. It is noted that for simple small-scale problems like $p=100$, Gurobi tends to find global optimal solutions before the specified time limit. We do not make a distinction of the convergent cases separately since the focus is mainly on the quality of incumbent solutions for more difficult large-scale problems under a given time budget.

For $p=100$, the gap of average solution quality between DASH and Gurobi remain fixed across time. This is largely due to the incorrect exclusion of optimal assets by DASH when the covariance matrix is ill-conditioned, as shown in Appendix \ref{apex2:FWevo}. As the problem size increases, the relative gap narrows and DASH eventually outperforms Gurobi with more prominent improvement of incumbent solutions over most of the sample problems. At $p=1500$, the two methods achieve broadly comparable solution quality across all time budgets. At $p=3000$, DASH shows the most pronounced advantage in the early stage, outperforming Gurobi in all instances at $T=60s$, before Gurobi progressively narrows the gap as the time budget grows. The result is consistent for well-conditioned ($\kappa =1e1$) and moderately-conditioned ($\kappa = 1e3$) matrices, which is shown in Appendix \ref{apex2:GenPerCom}. Additionally, as the covariance matrix becomes well-conditioned, the relative improvement of DASH over Gurobi weakens due to the decrease of problem difficulty as observed in Figure \ref{fig:gurobi_results}. The relation between the condition number of the covariance matrix and the computational difficulty of the problem is further demonstrated in the following section where we examine the robustness of DASH's improvement over Gurobi with varying problem configurations.

\subsection{Sensitivity Analysis}
\begin{figure}[tbp]
    \centering
    \includegraphics[width=0.6\linewidth]{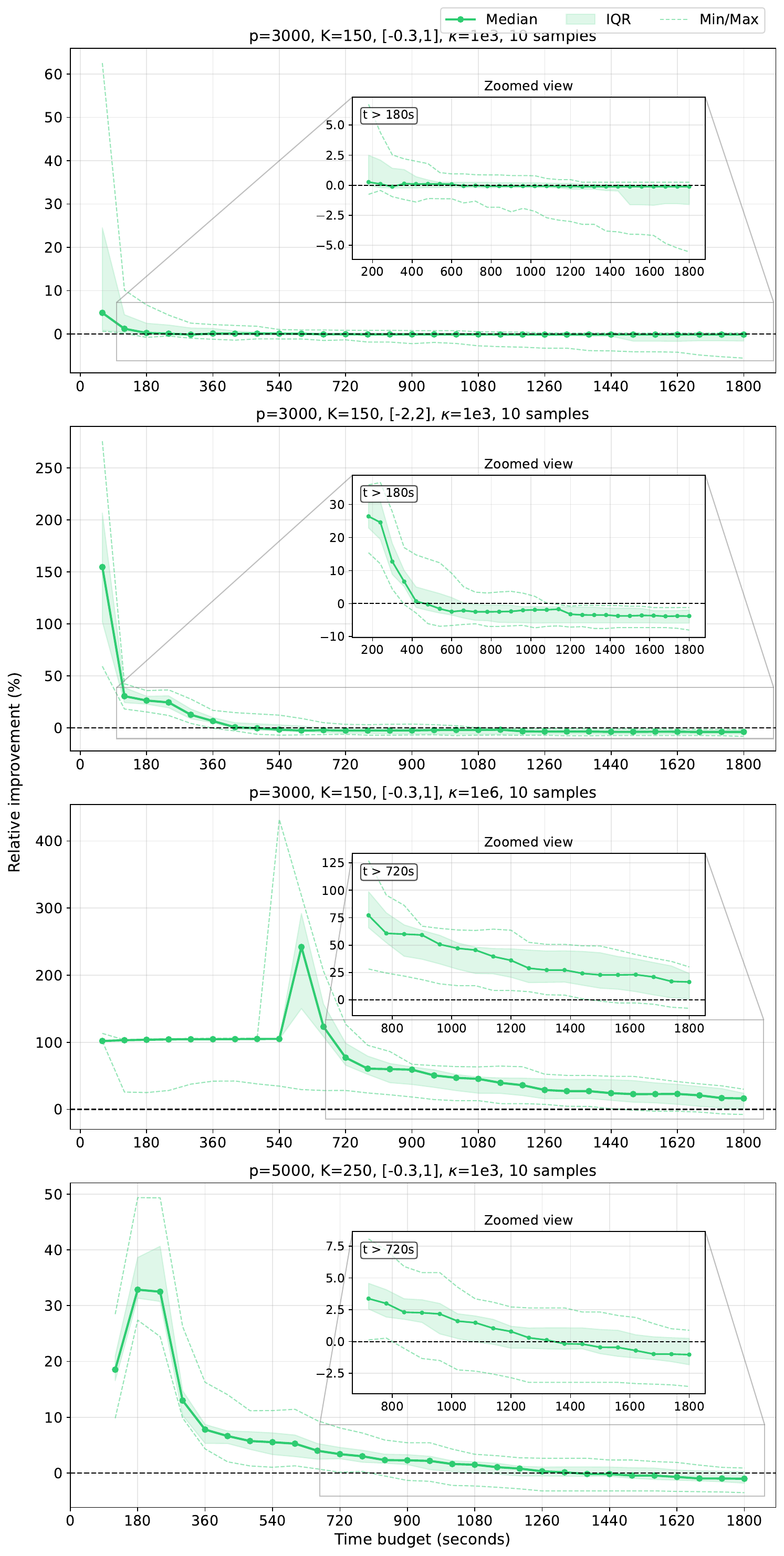}
    \caption{Sensitivity Analysis of DASH Improvement}
    \label{fig:sensitivity analysis}
\end{figure}

As we have shown in the previous subsection, in addition to the combinatorial nature of the problem size, the difficulty of the problem is also related to the condition number of the covariance matrix. Consequently, the box constraint is also of crucial importance due to its shrinkage-like effect on the problem \citep{JagannathanMa2003}. In Figure \ref{fig:sensitivity analysis}, we show the sensitivity analysis of the performance comparison between DASH and Gurobi under different problem configurations. Denote the incumbent solution of DASH and Gurobi at time $t$ by $f_d(t)$ and $f_g(t)$ respectively, the figure tracks the relative gap
\[R_g=\frac{f_g(t)-f_d(t)}{|f_g(t)|}\]
over a time span of 1800 seconds. $R_g$ measures the percentage improvement or degradation of the incumbent solution quality of DASH over that of Gurobi. We set $p=3000$, $K=150$, $[l,u]=[-0.3,1]$ as the benchmark case in the first subplot where the improvement over Gurobi is short-lived and Gurobi can achieve better solution as the time budget increases. In the remaining three subplots, the box constraint, the condition number, and the problem size are perturbed separately to increase the problem difficulty marginally and to examine the robustness of the improvement of DASH over Gurobi. The improvement is over two dimensions: the relative improvement for a fixed time budget and the duration of time of improvement over Gurobi. When relaxing the box constraint to $[-2,2]$ as in subplot 2, an expansion of the improvement region is clear in both the vertical magnitude and the horizontal duration where an initial median improvement of 150\% is achieved and more than 10\% improvement is persistent in the first 300 seconds before gradually decaying towards 0. Similar vertical and horizontal expansion of the improvement region is also observed in subplots 3 and 4 where we consider the ill-conditioned matrices and a larger problem size respectively. The horizontal improvement is consistent in both cases while the vertical improvement is significantly more substantial for ill-conditioned matrices.

Overall, the sensitivity analysis confirms that DASH delivers significant improvement of the incumbent solutions precisely when the problem is most difficult, which can be a loose box constraint, an ill-conditioned covariance matrix, and a large problem size. This highlights the capacity of DASH in making significant contribution to difficult problem settings in practice. In such settings, the filtering stage of DASH effectively reduces the combinatorial search space to a subspace where Gurobi can use computation resources more efficiently to find high-quality incumbent solutions.

\section{Conclusion}\label{sec:conclusion}
This paper presents DASH, a dimensionality reduction method that facilitates commercial MIP solvers such as Gurobi in finding better incumbent solutions of difficult subset portfolio selection problems, where the difficulty can arise from the problem size, the condition number of the sample covariance matrix, and the box constraint.

DASH suppresses the combinatorial nature of the problem by identifying a subproblem on which Gurobi is applied. The filtering of assets risks deterioration of solution quality in exchange for saving computational resources. We compare the performance of DASH and direct application of Gurobi on problems with varying configurations. The numerical results collectively show that significant improvement of incumbent solution quality is achieved for difficult problems that are of practical relevance. For example, if a practitioner respects an ill-conditioned sample covariance matrix but worries about the reliability of extreme solver solutions, instead of applying a shrinkage method to the matrix, a proper box constraint can be applied to directly prevent the solver from placing extreme weights on assets.

For tractable problems with a small problem size, well-conditioned covariance matrix, and tight box constraints — DASH offers little improvement over Gurobi. Based on the extensive set of synthetic experiments, DASH offers significant improvement when the problem size is large, the covariance matrix is ill-conditioned, and the box constraint is loose. When a combination of the conditions is present, the improvement of DASH over Gurobi is most pronounced and persistent.

The primary limitation of DASH is the possibility of incorrect exclusion of optimal assets in the filtering stage. This concern is of limited practical importance since comparable high-quality incumbent solutions tend to exist in the filtered subset and DASH can be complementary to Gurobi by using the better-quality incumbent solution of DASH as an upper bound for Gurobi, enabling more efficient branch pruning and reducing the overall computational burden.

\bibliography{references}

\appendix
\section{Proofs}

\subsection{Proof of Lemma \ref{lem1}}\label{apex1:proof_lem1}

\begin{proof}
    \begin{enumerate}
        \item The Hessian of the objective function is $\tilde{\Sigma}_t$. For any arbitrary $x\in \mathcal{B}$, we have that
        \begin{align*}
            x^\top \tilde{\Sigma}_t x & = x^\top T_t \Sigma T_t x + \delta (\|x\|_2^2 -\|T_tx\|_2^2 ) \\
            &\geq (T_tx)^\top \Sigma (T_tx)+\delta(1-\|T_t\|^2)\|x\|_2^2
        \end{align*}
        If $x\notin \ker T_t$, then $(T_tx)^\top \Sigma (T_tx)\geq0$, where we have strict inequality if $\Sigma$ is positive definite. Also, $\delta (1-\|T_t\|^2)\|x\|_2^2\geq 0$ since $0<\max t\leq 1$ $(\mathbf{0}\notin \mathcal{T})$ and $0<\|T_t\|\leq 1$, where we can achieve strict inequality if $t\prec 1$ so that $\|T\|<1$ and $\delta(1-\|T_t\|^2)\|x\|_2^2>0$, ignoring the infeasible point $x=\mathbf{0}\notin X(t)$. Hence, $x^\top \tilde{\Sigma}_t x\geq 0$ where we have strict inequality if $\Sigma$ is positive definite or $0\prec t\prec 1$.\\
        On the other hand, if $x\in \ker T_t $, then $(T_tx)^\top \Sigma (T_tx)=0$, and $\delta (\|x\|_2^2-\|T_tx\|_2^2)=\delta \|x\|_2^2$, so that $x^\top \tilde{\Sigma}_t x=\delta \|x\|_2^2\geq 0$ where equality is only achieved at the infeasible point $x=\mathbf{0}$.\\
        As such, $\tilde{\Sigma}_t$ is positive semi-definite and $f$ is convex. If $\Sigma$ is positive definite or $t\prec \mathbf{1}$, then $\tilde{\Sigma}_t$ is positive definite and $f$ is strictly convex.\\
        It is clear that the feasible set is convex with linear equality and inequality constraints. Hence, \ref{P1.4.1} is a convex optimization problem. If $\Sigma$ is positive definite or $t\prec \mathbf{1}$, then \ref{P1.4.1} is a strict convex optimization problem.\\
        To see $x^*_i = 0$ when $t_i = 0$, simply note that $x_i$ is decoupled from the other coordinates and the linear term vanishes. That is, $\text{argmin}_{x_i}\delta x_i^2 = 0$.
        \item Since $G=[-I,I]^\top$ and $b=[-l,u]^\top$ with $l<0<u$, the active inequality constraints at any $x\in X(t)$ are of the form $l_i-t_ix_i=0$ or $t_ix_i-u_i=0$ for some $i\in\{1,\dots,p\} $, and we denote the all bounded coordinates by $\mathcal{I}(x):=\{i\in \{1,\dots, p\}: l_i-t_ix_i=0\;or\; t_ix_i-u_i=0\}\}$ as defined in Assumption \ref{assp1}. The corresponding inequality gradients are of the form $\pm t_ie_i$. Since the lower and upper bounds cannot be active simultaneously, the active inequality constraint gradients are nonzero multiples of distinct basis vectors $e_i$, $i\in \mathcal{I}(x)$, so that they are linearly independent.
        The gradient of the aggregate-weight equality constraint is $t=\sum_i t_ie_i=\sum_{i\in \text{supp}(t)}t_ie_i$. By Assumption \ref{assp1}, since there exists $j\in \text{supp}(t)\backslash \mathcal{I}(x)$, $t$ cannot be expressed as a linear combination of $e_i$, $i\in\mathcal{I}(x)$. As such, $t$ is also linearly independent to the inequality constraint gradients. Hence, we conclude that LICQ holds for all $x\in X(t)$.
        \item From the previous two results, since LICQ holds for all $x\in X(t)$ and $x^*(t):=\underset{x\in X(t)}{\text{argmin}} f(x;t)$ solves convex program \ref{P1.4.1}, we have KKT conditions at $x^*(t)$.
    \end{enumerate}

\end{proof}

\subsection{Proof of Theorem \ref{theo2}}\label{apex1:proof_theo2}
To prove Theorem \ref{theo2}, we need the following two propositions \citep[pp.231-232]{sundaram1996first}:

\begin{prop}\label{prop2}
    Let $X:\mathcal{T}\rightarrow P(\mathcal{B})$ be a compact-valued correspondence. Then, $X$ is upper-hemicontinuous at $t\in \mathcal{T}$ if and only if for all sequences $\{t^{(n)}\}_n\rightarrow t\in \mathcal{T}$, and for all sequences $\{x^{(n)}\}_n$ with $x^{(n)}\in X(t^{(n)})$, there is a subsequence $\{x^{(n_k)}\}_k\subset \{x^{(n)}\}_n$ such that $\{x^{(n_k)}\}_k$ converges to some $x\in X(t)$.
    \end{prop}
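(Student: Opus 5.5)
This is the standard sequential characterization of upper-hemicontinuity for compact-valued correspondences. I will prove the two directions separately, taking the definition to be the usual one: $X$ is usc at $t$ if for every open set $V\supseteq X(t)$ there is a neighbourhood $W$ of $t$ in $\mathcal{T}$ with $X(t')\subseteq V$ for all $t'\in W$. Throughout, I will use that $\mathcal{T}$ and $\mathcal{B}=[-M,M]^p$ are subsets of Euclidean space, so they are metric and neighbourhoods can be taken to be balls.

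\emph{Necessity.} Assume $X$ is usc at $t$, and take $t^{(n)}\to t$ with $x^{(n)}\in X(t^{(n)})$. The plan is to push the sequence into shrinking neighbourhoods of the compact set $X(t)$.
\begin{enumerate}
\item For each $m\in\mathbb{N}$, let $V_m=\{x: d(x,X(t))<1/m\}$, which is an open set containing $X(t)$.
\item By upper-hemicontinuity, there is $N_m$ with $x^{(n)}\in V_m$ for all $n\geq N_m$.
\item Choose indices $n_1<n_2<\cdots$ with $n_m\geq N_m$. Then pick $y_m\in X(t)$ with $\|x^{(n_m)}-y_m\|<1/m$; such $y_m$ exists because $X(t)$ is compact, so the distance is attained.
\item By compactness of $X(t)$, a further subsequence $y_{m_j}\to x\in X(t)$. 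Then $x^{(n_{m_j})}\to x$ as well, which is the required subsequence.
\end{enumerate}
This direction is routine. Compactness of $X(t)$ is what makes the limit lie in $X(t)$.

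\emph{Sufficiency.} Argue by contradiction. Suppose the sequential property holds but $X$ is not usc at $t$.
\begin{enumerate}
\item There is then an open $V\supseteq X(t)$ such that every ball $B(t,1/n)\cap\mathcal{T}$ contains some $t^{(n)}$ with $X(t^{(n)})\not\subseteq V$.
\item Pick $x^{(n)}\in X(t^{(n)})\setminus V$. Then $t^{(n)}\to t$.
\item By hypothesis, a subsequence $x^{(n_k)}\to x\in X(t)\subseteq V$.
\item But every $x^{(n_k)}$ lies in the closed set $V^c$, so the limit $x$ also lies in $V^c$. This contradicts $x\in V$.
\end{enumerate}

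The main point needing care is the necessity direction: the compact-valuedness assumption must be used to extract a limit inside $X(t)$ itself, not merely in its closure or in $\mathcal{B}$. The shrinking-neighbourhood construction above handles this. Alternatively, one can use that the $V_m$ nest down to $X(t)$ because $X(t)$ is closed.
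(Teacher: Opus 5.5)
Your proof is correct. The paper does not prove this proposition itself; it quotes it from \cite{sundaram1996first} (pp.~231--232) as a tool for the proof of Theorem~\ref{theo2}. So your argument is not a competing route but a self-contained version of the standard textbook proof that the citation stands in for.

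What your write-up adds is that it shows where each hypothesis is used. Compactness of $X(t)$ enters only in the ``only if'' direction, where you need the limit of the $y_m$ to lie inside $X(t)$. The ``if'' direction uses no compactness at all, only that $V^c$ is closed and that $\mathcal{T}$ is metric, so that a violating sequence $t^{(n)}\to t$ can be built from balls. That is the only direction the paper invokes: in the proof of Theorem~\ref{theo2} it deduces upper-hemicontinuity of the constraint correspondence from a Bolzano--Weierstrass argument in $\mathcal{B}$ together with closedness of the constraints.

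Two small tidying points:
\begin{enumerate}
\item In step 3 of the necessity part you do not need the distance to be attained. The bound $d(x^{(n_m)},X(t))<1/m$ already gives some $y_m\in X(t)$ within $1/m$, by the definition of the infimum.
\item The sets $V_m$ presuppose $X(t)\neq\emptyset$. This is either part of the definition of a correspondence (as in Sundaram), or, if empty values are allowed, the case is trivial. Taking $V=\emptyset$ in the definition of upper-hemicontinuity forces $X(t^{(n)})=\emptyset$ for large $n$, so the sequential condition is vacuous.
\end{enumerate}
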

\begin{prop}\label{prop3}
    Suppose $X:\mathcal{T}\rightarrow P(\mathcal{B})$ is lsc at $t$, and $x\in X(t)$. Then, for all sequences $t^{(n)}\rightarrow t$, there is $x^{(n)}\in X(t^{(n)})$ such that $x^{(n)}\rightarrow x$.\\
    Conversely, let $X:\mathcal{T}\rightarrow P(\mathcal{B})$, and let $t \in \mathcal{T}$, $x\in X(t)$. Suppose that for all sequences $t^{(n)}\rightarrow t$, there is a subsequence $n_k$ of $n$, and $x^{(n_k)}\in X(t^{(n_k)})$ such that $x^{(n_k)}\rightarrow x$. Then, $X$ is lower-hemicontinuous at $t$.
\end{prop}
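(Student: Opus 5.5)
The plan is to argue directly from the neighbourhood definition of lower-hemicontinuity. That definition says: $X$ is lsc at $t$ if for every open set $V\subseteq\mathbb{R}^p$ with $V\cap X(t)\neq\emptyset$ there is a neighbourhood $W$ of $t$ in $\mathcal{T}$ such that $X(t')\cap V\neq\emptyset$ for all $t'\in W$. I will use throughout that $X$ is nonempty-valued. In our application this holds because $X(t)$ is the feasible set of \ref{P1.4.1}, and the paper works only with $t$ at which the subproblem has a solution. The two directions are then handled separately.

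\textbf{Forward direction.} Fix $x\in X(t)$ and a sequence $t^{(n)}\rightarrow t$.
\begin{enumerate}
\item For each $m\in\mathbb{N}$, the open ball $B(x,1/m)$ meets $X(t)$, since it contains $x$. Lsc therefore gives a neighbourhood $W_m$ of $t$ with $X(t')\cap B(x,1/m)\neq\emptyset$ for every $t'\in W_m$.
\item Since $t^{(n)}\rightarrow t$, I choose a strictly increasing sequence of indices $N_1<N_2<\cdots$ such that $t^{(n)}\in W_m$ for all $n\geq N_m$.
\item I then define the selection as follows. For $n<N_1$, take $x^{(n)}$ to be an arbitrary point of $X(t^{(n)})$. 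For $N_m\leq n<N_{m+1}$, take $x^{(n)}\in X(t^{(n)})\cap B(x,1/m)$, which is nonempty because $t^{(n)}\in W_m$.
\item By construction $\|x^{(n)}-x\|<1/m$ whenever $n\geq N_m$, so $x^{(n)}\rightarrow x$.
\end{enumerate}
The only delicate point is this bookkeeping: the indices $N_m$ must be made increasing so that every $n$ is assigned to exactly one ball.

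\textbf{Converse direction.} I read the hypothesis as holding for every $x\in X(t)$, which is what lsc requires, and argue by contrapositive. Suppose $X$ is not lsc at $t$.
\begin{enumerate}
\item Then there is an open set $V$ with $V\cap X(t)\neq\emptyset$ such that every neighbourhood of $t$ contains some $t'$ with $X(t')\cap V=\emptyset$.
\item Taking the neighbourhoods to be balls of radius $1/n$, I obtain points $t^{(n)}$ with $\|t^{(n)}-t\|<1/n$ and $X(t^{(n)})\cap V=\emptyset$. Hence $t^{(n)}\rightarrow t$.
\item Pick $x\in V\cap X(t)$. The hypothesis gives a subsequence $n_k$ and points $x^{(n_k)}\in X(t^{(n_k)})$ with $x^{(n_k)}\rightarrow x$.
\item Because $V$ is open and contains $x$, we have $x^{(n_k)}\in V$ for all large $k$. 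This contradicts $X(t^{(n_k)})\cap V=\emptyset$.
\end{enumerate}

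I do not expect a genuine obstacle in either direction. The step needing the most care is the diagonal choice of the $N_m$ in the forward direction. The converse is immediate once the failure of lsc is unpacked into a sequence.
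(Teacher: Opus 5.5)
Your proof is correct. The paper gives no argument of its own for this proposition. It quotes it from Sundaram's textbook (pp.~231--232) and uses it as a black box in the lower-hemicontinuity step of the proof of Theorem~\ref{theo2}. Your argument is the standard self-contained proof from the open-set definition. The forward direction uses nested balls $B(x,1/m)$ with a strictly increasing choice of indices $N_m$. The converse uses a sequence $t^{(n)}$ that witnesses the failure of lower-hemicontinuity. What it adds is that the reader need not consult the textbook.

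The two interpretive points you flag are genuinely needed for the statement, as transcribed in the paper, to be true.

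\begin{itemize}
\item Nonemptiness of $X(t^{(n)})$ is needed for the forward direction, but only for the finitely many indices $n<N_1$. Lower-hemicontinuity at $t$ forces $X$ to be nonempty only near $t$, so without this convention $x^{(n)}$ could fail to exist for early $n$. It is the usual convention that correspondences are nonempty-valued. In the application it is implicit, since the paper treats $x^*(t)$ as existing for every $t\in\mathcal{T}$.
\item For the converse, the paper's wording fixes a single $x\in X(t)$, and under that reading the claim is false. Take a correspondence equal to $\{a,b\}$ at a non-isolated point $t$ and to $\{a\}$ elsewhere. It satisfies the hypothesis for $x=a$, but it is not lower-hemicontinuous at $t$. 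Reading the hypothesis as holding for every $x\in X(t)$, as you do, is the correct version. It is also exactly how the paper uses the proposition in the proof of Theorem~\ref{theo2}, where $x\in X(t)$ is arbitrary.
\end{itemize}
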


\begin{proof}
\begin{enumerate}
    \item The result follows from Berge's Maximum Theorem, or more precisely, the Minimum Theorem for minimization problems as in Theorem \ref{theo1}. As such, we prove the following conditions in the Minimum Theorem:
    \begin{enumerate}
        \item $f$ is a continuous function on $\mathcal{T} \times \mathcal{B}$
        \item $X: \mathcal{T}\rightarrow P(\mathcal{B})$ is a compact-valued continuous correspondence.
    \end{enumerate}
    Firstly, it is clear that $f(t,x)$ is continuous on the product space $\mathcal{T}\times \mathcal{B}$. We now show that the correspondence $X$ is compact-valued and continuous. For an arbitrary $t\in \mathcal{T}$, $X(t)\subset\mathcal{B}$ is closed and bounded. Hence, by Heine-Borel theorem, $X(t)$ is compact so that the correspondence $X$ is compact-valued. We also note that $X(t)$ is convex so that $X$ is convex-valued as well.\\
    To show the correspondence $X$ is continuous, we show that it is both upper-hemicontinuous and lower-hemicontinuous.\\
    For an arbitrary sequence $\{t^{(n)}\}_n \rightarrow t\in \mathcal{T}$ and an arbitrary sequence $\{x^{(n)}\}_n\subset  X(t_n)$, by a proposition above we want to show that there exists a subsequence $\{x^{(n_k)}\}_k\subset \{x^{(n)}\}_n$ such that it converges to some $x\in X(t)$. Since $\mathcal{B}$ is compact, by Bolzano-Weistrass theorem, there exists a subsequence $\{x^{(n_k)}\}_k\subset \{x^{(n)}\}_n$ that converges to some $x$. For $k\in\mathbb{N}$, we have $x^{(n_k)}\in X(t^{n_k})$ so that $GT_{t^{(n_k)}}x^{(n_k)}-b\preceq 0$ and $AT_{t^{(n_k)}}x^{(n_{k})}-c=0$. Since $T_{t^{(n_k)}}\rightarrow T_t$ and $x^{(n_k)}\rightarrow x$ and left-hand sides of inequality and equality constraints are continuous in both $t$ and $x$, we have $GT_tx-b=\lim_k\Big(GT_{t^{(n_k)}}x^{(n_k)}-b \Big)\preceq 0$ and $AT_tx-c = \lim_k\Big(AT_{t^{(n_k)}}x^{(n_k)} -c\Big)=0$ as $(-\infty,0]^m$ and $\{0\}$ are closed. Hence, $x\in X(t)$, so that $X$ is upper-hemicontinuous.\\

    We now proceed to show that $X$ is lower-hemicontinuous. Let $t\in\mathcal{T}$, $x\in X(t)$, for an arbitrary sequence $\{t^{(n)}\}_n \rightarrow t$, we want to show that there is a subsequence $\{t^{(n_k)}\}_k\subset \{t^{(n)}\}_n$ and $\{x^{(n_k)}\}_k\subset \{x^{(n)}\}_n$ where $x^{(n)}\in X(t^{(n)})$ such that $x^{(n_k)}\rightarrow x$. \\
    Denote the index set of active inequality constraints as $\mathcal{A}:=\{i\in \{1,\dots,2p\}: [GT_tx-b]_i=0\}$, $\mathcal{A}^c = \{i\in\{1,\dots,2p\}: [GT_tx-b]_i<0\}$ and $G_{\mathcal{A}}\in \mathbb{R}^{|\mathcal{A}|\times p}$ as a sub-matrix of $G$ that contains the active rows of $G$. Then we have the following linear system:
    \[\underbrace{\begin{bmatrix}
        AT_t\\
        G_{\mathcal{A}}T_t
    \end{bmatrix}}_{=:L_t\in \mathbb{R}^{(|\mathcal{A}|+1)\times p}}x=\begin{bmatrix}
        c\\
        b_{\mathcal{A}}
    \end{bmatrix},
\]
 and the inequality system \[G_{\mathcal{A}^c}T_tx\prec b_{\mathcal{A}^c}\]
where $G_{\mathcal{A}^c}\in\mathbb{R}^{(2p-|\mathcal{A}|)\times p}$. Since $G=[-I, I]^\top$, $b=[l,u]^\top$, the only inequality constraints on $x_i$ are $l_i\leq t_i x_i\leq u_i$. If $t_i=0$, both inequality constraints on $x_i$ are inactive, so the box inequality constraints on $x_i$ can only be active if $t_i>0$. Hence, the active rows of $G_{\mathcal{A}}T_t$ are not trivially all zeros.\\
By LICQ (Lemma \ref{lem1}), since the rows of $L_t$ are gradients of constraints, they are linearly independent. In particular, the gradients of the linear system are $t=\sum_{i\in\text{supp}}t_ie_i$ and $\pm t_ie_i$, where the sign is $+$($-$) if upper(lower) bound is binding, for $i\in \mathcal{I}(x)=\{i\in \{1,\dots,p\}: l_i-t_ix_i=0 \;or\; t_ix_i-u_i=0\}$ as in Assumption \ref{assp1}. We note that $|\mathcal{I}(x)|=|\mathcal{A}|$ since each bounded coordinates $i\in \mathcal{I}(x)$ corresponds to an active inequality constraint $j\in \mathcal{A}$.
Hence, the rows of $L_t$ as continuous functions of $t$ are bounded away from $\mathbf{0}$ so that there exists $\epsilon>0$ such that for all $\tau \in B_\epsilon(t)$, $\tau_i>0\;\forall i\in \text{supp}(t)$ and $L_\tau^\top:=[T_\tau A^\top, T_\tau G_{\mathcal{A}}^\top]^\top$ is non-degenerate, i.e. $rank(L_\tau)=rank(L_t)=|\mathcal{A}|+1$. Since, for an arbitrary sequence $\{t^{(n)}\}_n\rightarrow t$, there exists $N_0$ large enough such that $t^{(n)}\in B_\epsilon(t)$ for all $n\geq N_0$, we have that $\lim_n L_{t^{(n)}}^+=L_t^+$, where $L^+$ is the pseudo-inverse of $L_t$ and its continuity is guaranteed since $rank(L_{t^{(n)}})=rank(L)=|\mathcal{A}|+1$ \citep{Rakocevic1997}.\\
    Hence, given such a sequence $\{t^{(n)}\}_n$, define a sequence $\{\delta^{(n)}\}_n$ such that, for all $n\in\mathbb{N}$, $\delta^{(n)}$ satisfies the following linear system:
    \[
\underbrace{\begin{bmatrix} AT_{t^{(n)}} \\ G_{\mathcal{A}}T_{t^{(n)}} \end{bmatrix}}_{=:L_n} \delta^{(n)} = \underbrace{\begin{bmatrix} c - AT_{t^{(n)}}x \\ b_{\mathcal{A}}- G_{\mathcal{A}}T_{t^{(n)}}x \end{bmatrix}}_{=:r^{(n)}}.
\]
Define $\delta^{(n)}$ as the minimum norm particular solution, $\delta^{(n)}=L_n^+r^{(n)}$.
 By construction, since $rank(L_n)=rank(L)=|\mathcal{A}|+1$, we have $\lim_n L_n^+ =L^+$, so that $\lim_n \delta^{(n)}=L^+\lim_n r^{(n)}=0$. Hence, we define a sequence $\{x^{(n)}\}_n$ with $x^{(n)}=x+\delta^{(n)}$, for all $n\in\mathbb{N}$ and we note that $x^{(n)}\rightarrow x$.\\
 For all $n\in\mathbb{N}$, verify that
 \begin{align*}
     L_nx^{(n)}&=L_n(x+\delta^{(n)})=L_nx+L_nL_n^+(\begin{bmatrix}
         c\\
         b_{\mathcal{A}}
     \end{bmatrix} - L_nx)=\begin{bmatrix}
         c\\
         b_{\mathcal{A}}
     \end{bmatrix}
 \end{align*}
where $L_nL_n^+=I$ since $L_n$ has full row rank, i.e. $rank(L_n)=|\mathcal{A}|+1$ for $n\geq N_0$. Hence, $x^{(n)}$ satisfies the equality constraint and the active subset of inequality constraints by equality. To satisfy the complementary set of inactive inequality constraints, we note that:
\begin{align*}
    G_{\mathcal{A}^c}T_{t^{(n)}}x^{(n)}&=G_{\mathcal{A}^c}(T_{t^{(n)}}-T_t+T_t)(x+\delta^{(n)})\\
    &= G_{\mathcal{A}^c}(T_{t^{(n)}}-T_t)(x+\delta^{(n)})+G_{\mathcal{A}^c}T_t(x+\delta^{(n)})\\
    &\rightarrow 0 + G_{\mathcal{A}^c} T_tx\prec b_{\mathcal{A}^c}
\end{align*}
since $T_{t^{(n)}}\rightarrow T_t$ and $\delta^{(n)}\rightarrow 0$. Hence, there exist $N_1\in\mathbb{N}$ large enough, such that $G_{\mathcal{A}^c}T_{t^{(n)}}x^{(n)}\prec b_{\mathcal{A}^c}$ and $x^{(n)}\in X(t^{(n)})$ for all $n\geq N_1$.
    Thus, upon truncating the first $\max\{N_0,N_1\}$ terms and re-indexing, we have $x^{(n)}\in X(t^{(n)})$ and $x^{(n)}\rightarrow x$ for sufficiently large $n$.
    As such, by Proposition \ref{prop3}, $X$ is also lower-hemicontinuous. Since $X$ is both upper-hemicontinuous and lower-hemicontinuous, $X$ is a continuous correspondence.\\
    We have shown that $f$ is a continuous function on $\mathcal{T}\times \mathcal{B}$, and $X$ is a compact-valued continuous correspondence. In addition, by Lemma \ref{lem1}, $f(\cdot; t)$ is strictly convex in $x$ for each $t\in\mathcal{T}$, and $X(t)$ is clearly convex as a nontrivial intersection of a hypercube with a hyperplane. Thus, by Theorem \ref{theo1}, we can conclude that $f^*$ and $X^*$ are both continuous functions on $\mathcal{T}$, and we can denote the latter by $x^*(t)=X^*(t)$ to emphasize that $X^*$ is singled-valued.\\

    \item Lastly, to show that the dual variables $\lambda^*: \mathcal{T}\rightarrow \mathbb{R}^{2p}, \;\nu^*:\mathcal{T}\rightarrow \mathbb{R}$ are continuous in $t$, for an arbitrary $t\in \mathcal{T}$, we write the stationarity condition in Lemma \ref{lem1} as:
    \[\begin{bmatrix}
        T_tA^\top & T_tG^\top
    \end{bmatrix}\begin{bmatrix}
        \nu^*\\
        \lambda^*
    \end{bmatrix}=2\tilde{\Sigma}_tx^* -T_t\mu,\]
    where the right hand side is a continuous function of $t$ by the continuity of $x^*$ in $t$ proved in part one. Identically define the set of active inequality constraints at $x^*$ by $\mathcal{A}=\{i\in \{1,\dots, 2p\}: [GT_tx^*-b]_i=0\}$
    and we can rearrange and write the stationarity condition as
    \[\begin{bmatrix}
        T_tA^\top & T_tG_{\mathcal{A}}^\top & T_tG_{\mathcal{A}^c}^\top
    \end{bmatrix}\begin{bmatrix}
        \nu^*\\
        \lambda_{\mathcal{A}}^*\\
        \lambda_{\mathcal{A}^c}^*
    \end{bmatrix} = \underbrace{\begin{bmatrix}
        T_tA^\top & T_tG_{\mathcal{A}}^\top
    \end{bmatrix}}_{=L_t^\top}\begin{bmatrix}
        \nu^*\\
        \lambda_{\mathcal{A}}^*
    \end{bmatrix}+0=2\tilde{\Sigma}_tx^* -T_t\mu\]
    where $\lambda^*_i(t)=0$ for all $i\in\mathcal{A}^c$ by complementary slackness (Lemma \ref{lem1}) and $L_t$ is defined identically in part 1 so that there exists $\epsilon>0$ such that $\text{rank}(L_\tau)=\text{rank}(L_t)=|\mathcal{A}|+1$ for all $\tau \in B_\epsilon(t)\cap \mathcal{T}$ and the pseudoinverse $L_t^+$ is continuous at $t$ \citep{Rakocevic1997}. Hence, we have that
    \[\begin{bmatrix}
        \nu^*\\
        \lambda_{\mathcal{A}}^*
    \end{bmatrix}=L_t^+\Big(2\tilde{\Sigma}_tx^*-T_t\mu\Big)\]
    and therefore, $\nu^*$, $\lambda^*_{\mathcal{A}}$ are continuous functions of $t$ as products of two continuous functions.\\
    For $i\in \mathcal{A}^c$, suppose it corresponds to the inactive lower bound on $x_j$ for a $j\in \{1,\dots,p\}$, then we have $l_j<t_jx^*_j(t)$. Since $T_tx^*(t)$ is continuous in $t$, each coordinate $t_jx^*_j(t)$ is continuous in $t$ as well. As such, there exists $\epsilon >0$ such that for all $\tau \in B_\epsilon (t)$, we have $l_j < \tau_j x^*_j(\tau)$ and $\lambda^*_i(\tau)=\lambda^*_i(t)=0$ by the complementary slackness condition. Hence, $\lambda^*_{\mathcal{A}^c}$ is trivially continuous as zeros at $t$. The same argument holds if $i\in \mathcal{A}^c$ corresponds to an inactive upper bound. Hence, $\lambda^*_{\mathcal{A}^c}$ is continuous in $t$, and more generally, $\lambda^*$ is continuous in $t$.
\end{enumerate}
To conclude, we have established that $x^*(t), f^*(t),\lambda^*(t) $, and $\nu^*(t)$ are all continuous on $\mathcal{T}$. This completes the proof of Theorem \ref{theo2}.
\end{proof}

\subsection{Proof of Theorem \ref{theo3}}\label{apex1:proof_theo3}
\begin{proof}
    We adopt the same notation as in \cite{sundaram1996first}. For example, $df$ is the total differential of $f$ and $D_tf\in \mathbb{R}^{1\times p}$ is the partial derivative of $f$ w.r.t. $t\in \mathbb{R}^p$.

    By Theorem \ref{theo2}, $x^*, \lambda^*, \nu^*$ are continuous functions of $t$ in \ref{P1.4}. The value function can be written as:
    \begin{align*}
        \mathcal{L}^*(t)=  L(x^*,\lambda^*,\nu^*;t)& = f(t,x^*(t)) + (\lambda^*)^\top g(t,x^*(t)) + (\nu^*)^\top h(t,x^*(t))
    \end{align*}
    where $f(t,x^*(t))=(x^*)^\top \tilde{\Sigma}_t x^* -\mu^\top T_tx^*$, $g(t,x^*(t))=GT_tx^*-b$ and $h(t,x^*(t))=AT_tx^*-c$. By complementary slackness and primal feasibility of KKT conditions (Lemma \ref{lem1}) we note that the Lagrangian estimated at the optimum is equal to the value function, $\mathcal{L}^*(t)=f(t,x^*(t))$. Noting the dependency of $x^*$ on $t$, we suppress the notations as $f^*(t)$, $g^*(t)$, and $h^*(t)$.
Then we take the total differential of $\mathcal{L}^*$ w.r.t. $dt$:
\begin{align*}
    d\mathcal{L}^*(t;dt)=&df^*(t)+d\lambda^*(t)^\top g^*(t)+\lambda^*(t)^\top d g^*(t)+d\nu^*(t)^\top h^*(t)+\nu^*(t)^\top dh^*(t)\\
    =& (D_tf^* + D_x f^* D_tx^*)dt\\
    & + dt^\top D_t(\lambda^*)^\top g^* + (\lambda^*)^\top \Big[ D_tg^* +D_xg^* D_x x^*\Big]dt \\
    & + dt^\top D_t(\nu^*)^\top h^* + (\nu^*)^\top \Big[ D_t h^* +D_xh^* D_t x^*(t)\Big]dt \\
    =& D_tf^* dt + \Big[D_xf^* + \lambda^*(t)^\top D_xg^*+\nu^*(t)^\top D_xh^* \Big]D_tx^*dt\\
    &+ dt^\top D_t(\lambda^*)^\top g^*+ dt^\top D_t\nu^*(t)^\top h^* + (\lambda^*)^\top D_tg^*dt + (\nu^*)^\top D_th^*dt\\
\end{align*}
By Lemma \ref{lem1}, at $(t,x^*(t))$, we have the following stationarity, primal feasibility, and complementary slackness conditions:
\[D_xL(x^*,\lambda^*,\nu^*;t)= D_xf^*+(\lambda^*)^\top D_x g^*(t) + (\nu^*)^\top  D_xh^*(t)= \mathbf{0}^\top,\]
\[h(t,x^*)=AT_tx^* -c=0,\]
\[(\lambda^*)^\top g(t,x^*)=(\lambda^*)^\top (GT_tx^* -b)=0.\]
Firstly, since $h(t,x^*)=0$, $dt^\top D_t\nu^*(t)^\top h^*=0$. Identically denote the active set of inequality constraints as $\mathcal{A}:=\{i\in \{1,\dots,2p\}: [GT_tx-b]_i=0\}$ as in Theorem \ref{theo2}. For any $i\in \mathcal{A}$, $g_i^* =0$. On the other hand, for any $i\in \mathcal{A}^c$, suppose it corresponds to the inactive lower bound on $x_j$ for a $j\in \{1,\dots,p\}$, then we have $l_j<t_jx^*_j(t)$. Since $tx^*(t)$ is a continuous function of $t$, there exists $\epsilon >0$ such that for all $\tau \in B_\epsilon (t)$, we have $l_j < \tau_j x_j^*(\tau)$ and $\lambda^*(\tau)=0$ so that the $ith$ row of $D_t(\lambda^*)$ is zero, i.e. $[D_t \lambda^*]_{i,1:p}=0$, which also follows directly from part 2 of Theorem \ref{theo2}. Hence, grouping both cases, we have that $dt^\top D_t(\lambda^*)^\top g^*=0$.\\
By substituting the three equalities into $d\mathcal{L}^*(t;dt)$, we have that
\[d\mathcal{L}^*(t;dt) = D_t\mathcal{L}^*dt= \Big[D_tf^*+(\lambda^*)^\top D_tg^*+(\nu^*)^\top D_th^*\Big]dt\]
where $\nabla_t\mathcal{L}^*=(D_t\mathcal{L}^*)^\top$ which is the same result partial differential result by the Envelop Theorem.\\
Direct computation gives the following closed-form expressions:
\[D_tf^* = 2t^\top T_{x^*}(\Sigma -\delta I)T_{x^*}-\mu^\top T_x,\]
\[D_tg^* =G T_{x^*},\]
\[D_th^* = AT_{x^*},\]
so that gradient of the value function is given by
\begin{align*}
        \nabla_t f^* = &(D_t\mathcal{L}^*)^\top= \Big[D_tf + (\lambda^*)^\top D_t g^* +(\nu^*)^\top  D_t h^* \Big]^\top\\
    = & [2t^\top T_x(\Sigma -\delta I)T_x -\mu^\top T_x +(\lambda^*)^\top GT_x +(\nu^*)^\top AT_x]^\top
    \end{align*}
\end{proof}

\subsection{Proof of Theorem \ref{theo4}}\label{apex1:proof_theo4}

\begin{proof}
    Given $t\in \mathcal{T}$, by the stationarity condition in Lemma \ref{lem1}, we have that
    \[2t_i\sum_{j=1}^p t_j\sigma_{ij}x_j^* + 2\delta (1-t_i^2)x_i^*-t_i\mu_i+t_i(\lambda^*_{p+i}-\lambda^*_i)+ \nu^* t_i=0\]
    If $t_i=0$, then $x_i^*(t)=0$, so that $[\nabla_tf^*]_i=0$.\\
    Suppose that $t_i\in (0,1]$, then divide the KKT stationarity condition by $t_i$ and substitute into the expression of $[\nabla_tf^*]_i$ and we have
    \[[\nabla_t f^*]_i = -2\delta (x_i^{*})^2[t_i+\frac{1-t_i^2}{t_i}]=-2\frac{\delta}{t_i} (x_i^{*})^2\leq 0\]
    where equality holds only when $x_i^*=0$. This is true for all $i\in \{1,\dots , p\}$. Hence, for any $t\in \mathcal{T}$, the gradient components of $\nabla_tf^*$ are all non-positive.\\
    By Theorem \ref{theo2}, $x^*$, $\lambda^*$, $\nu^*$ are all continuous functions of $t$. For any convergent sequence $\{t^{(n)}\}_n$ such that $\{t^{(n)}_i\}_n$ converges to $0$, we have $\underset{n\rightarrow \infty}{\lim} x^*_i(t) = 0$ by Lemma \ref{lem1} part 1. Use the Stationarity condition again to get
    \[\frac{x_i^*}{t_i} = \frac{\mu_i-(\lambda_{p+i}^*-\lambda_i^*)-\nu^* -2\sum_{j\neq i}t_j\sigma_{ij}x_j^*}{2t_i^2\sigma_{ii} + 2\delta (1-t_i^2)},\]
    and denote the corresponding sequence as $\{(\frac{x_i^*}{t_i})^{(n)}\}_n$, then we have
    \[\underset{n\rightarrow \infty}{\lim}(\frac{x^*_i}{t_i})^{(n)}\leq \frac{C}{2\delta}\]
    where $C:=\max_n \{\Big|(\mu_i-(\lambda_{p+i}^*-\lambda_i^*)-\nu^*-2\sum_{j\neq i}t_j\sigma_{ij}x_j^*)^{(n)}\Big|\}_n<\infty$ and $0<\delta <\min\{\text{diag}(\Sigma)\}$. $C$ is well-defined by the Extreme Value Theorem since $\lambda^*_{p+1}$, $\lambda_i^*$, $\nu^*$, $x_j^*$ ($j\neq i$) are continuous on the compact domain $\mathcal{T}=\{t\in[0,1]^p: \|t\|_1 =k\}$ by Theorem \ref{theo2}\\
    As such, we have
    \begin{align*}
        \underset{n\rightarrow \infty}{\lim} \Big|[\nabla_{t^{(n)}}f^*]_i\Big| &=2\delta \lim_{n\rightarrow \infty}\Big|\frac{x_i^*(t^{(n)})}{t^{(n)}_i}\Big|\cdot \Big| x_i^*(t^{(n)})\Big|\\
        &\leq 2\delta \lim_{n\rightarrow \infty}\frac{C}{2\delta}\cdot\Big|x_i^*(t^{(n)})\Big|=0
    \end{align*}
    This concludes the proof.
\end{proof}

\subsection{Proof of Lemma \ref{lem2}} \label{apex1:proof_lem2}

\begin{proof}
\begin{enumerate}
    \item By the definition of the subset selection space, $f^*:\mathcal{P}\rightarrow \mathbb{R}$ assigns a unique real value to each subset $s\in \mathcal{P}$. Hence, for any $a,b,c \in \mathcal{P}$ such that $aR b$, and $bR c$, we have $f^*(a)\leq f^*(b)\leq f^*(c)$ so that $aR c$ and $R$ is transitive. Also, for any $s,r\in \mathcal{P}$, we have either $f^*(s)\leq f^*(r)$ or $f^*(r)\leq f^*(s)$, so that either $rR s$ or $sR r$. Hence, $R$ is strongly connected, which implies reflexivity. Hence, $R$ is a total preorder on $\mathcal{P}$.\\
    \item Since $\forall a,b\in \mathcal{P}$, we have $a\neq b \implies f^*(a)\neq f^*(b)$, then $aRb$ and $bR a$ implies $f^*(a)=f^*(b)$, so that $a=b$ and $R$ is antisymmetric. Hence, $R$ is a total order and $(\mathcal{P},R)$ is totally ordered.
\end{enumerate}
\end{proof}

\section{Figures And Tables}
\subsection{Illustration of FW Descent Scheme over Ill-conditioned Matrices}\label{apex2:FWevo}
\begin{figure}[H]
    \centering
    \includegraphics[width=1\linewidth]{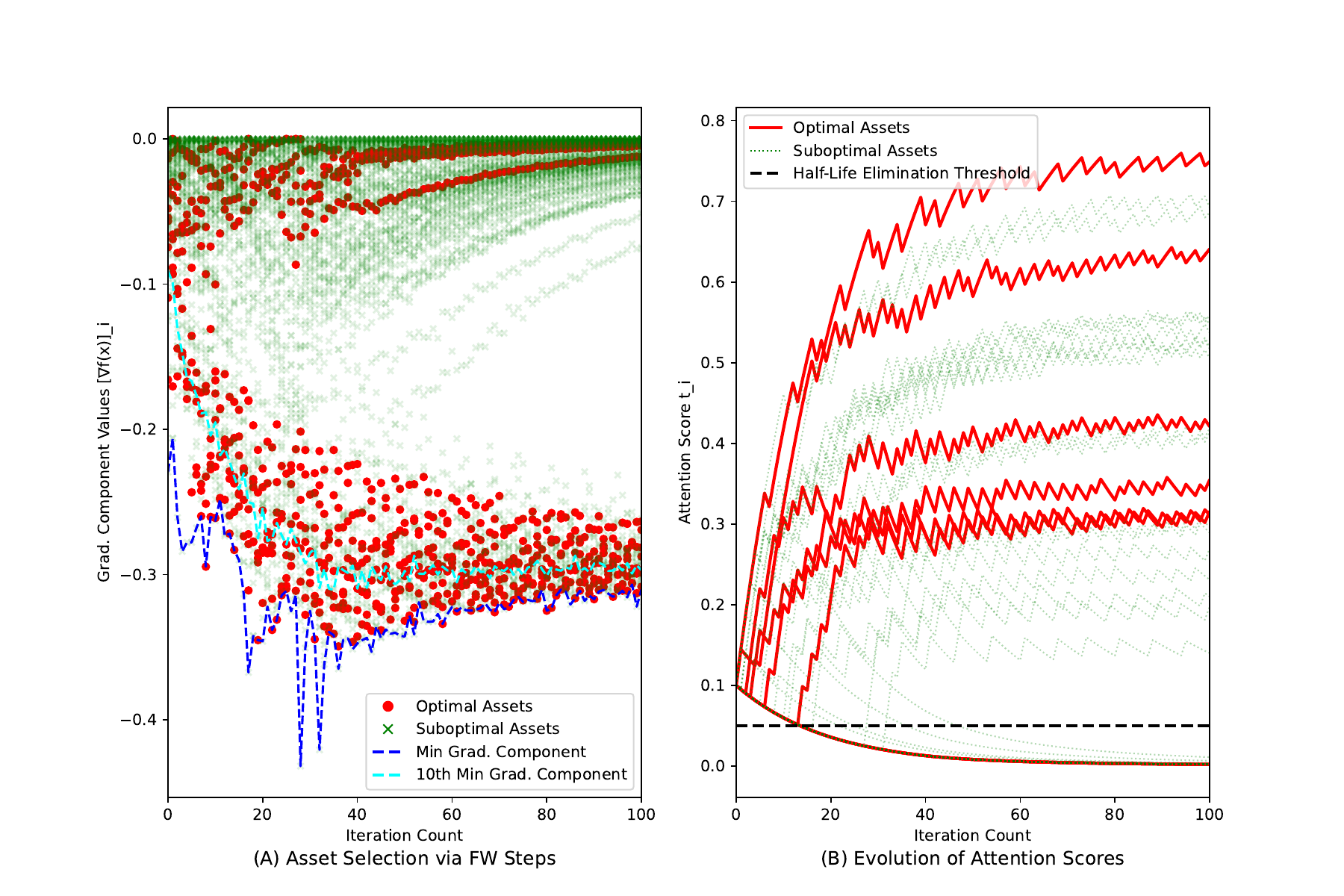}
    \caption{Attention Score Evolution via FW Steps}
    \vspace{2pt}
    {\raggedright \footnotesize \textbf{Note:} The Figure uses synthetic spiked covariance matrix with the following problem parameters: $p=100$, $k=10$, $l=-1$, $u=1$, $\kappa =1e3$, sample $1$ in seed $42$.\par}
    \label{fig:fw_kappa1e3}
\end{figure}

\begin{figure}[H]
    \centering
    \includegraphics[width=1\linewidth]{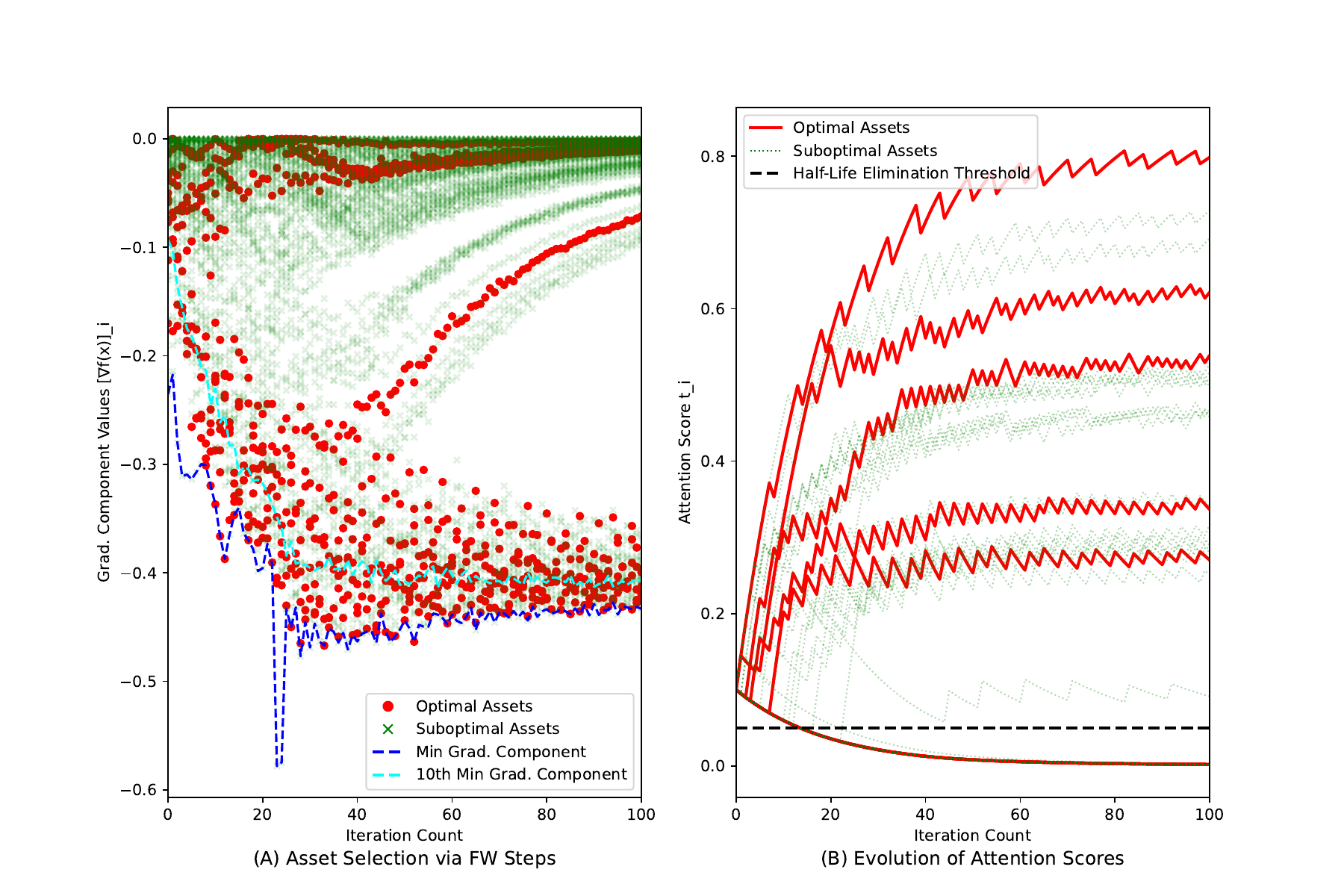}
    \caption{Attention Score Evolution via FW Steps}
    \vspace{2pt}
    {\raggedright \footnotesize \textbf{Note:} The Figure uses synthetic spiked covariance matrix with the following problem parameters: $p=100$, $k=10$, $l=-1$, $u=1$, $\kappa =1e6$, sample $1$ in seed $42$.\par}
    \label{fig:fw_kappa1e6}
\end{figure}
\newpage
\subsection{FWF Efficiency}\label{apex2:fwf}
\begin{figure}[h]
    \centering
    \includegraphics[width=0.8\linewidth]{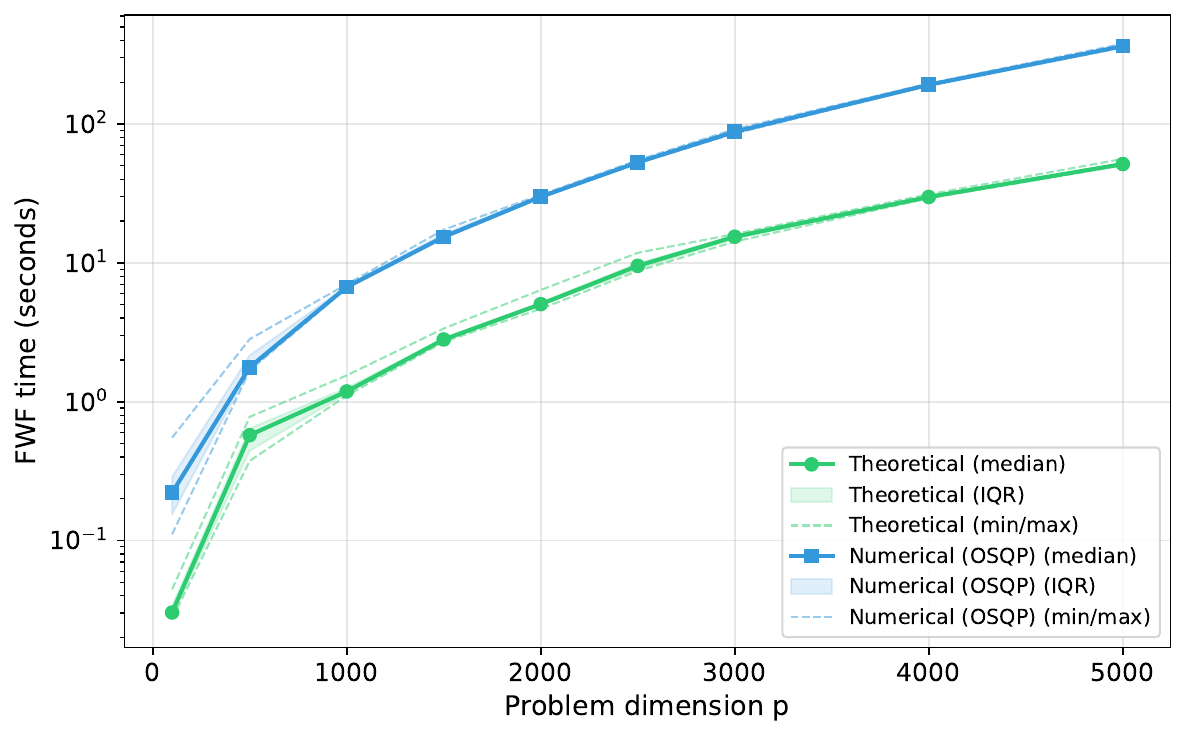}
    \caption{FWF Efficiency: Closed-form vs Envelope Gradient (10 samples per $p$)}
    \label{fig:FWF_benchmark}
\end{figure}
\FloatBarrier
\newpage
\subsection{Additional Performance Tables}\label{apex2:GenPerCom}

\begin{table}[h]
\centering
\footnotesize
\setlength{\tabcolsep}{3pt}
\renewcommand{\arraystretch}{1.3}
\caption{Performance Comparison: $\kappa=1e1$, $[l,u]=[-0.3,1]$}
\begin{adjustbox}{max width = \textwidth}
\begin{tabular}{ccrrrrrrrrrrrr}
\toprule
 &  & \multicolumn{3}{c}{Time (60s)} & \multicolumn{3}{c}{Time (180s)} & \multicolumn{3}{c}{Time (300s)} & \multicolumn{3}{c}{Time (600s)} \\
p & k & DASH & Gurobi & Win\% & DASH & Gurobi & Win\% & DASH & Gurobi & Win\% & DASH & Gurobi & Win\% \\
\midrule
\multirow{3}{*}{100} & 3 (3.0\%) & -0.0082 & -0.0082 & 90.0\% & -0.0082 & -0.0082 & 90.0\% & -0.0082 & -0.0082 & 90.0\% & -0.0082 & -0.0082 & 90.0\% \\
 & 5 (5.0\%) & -0.0099 & -0.0099 & 100.0\% & -0.0099 & -0.0099 & 100.0\% & -0.0099 & -0.0099 & 100.0\% & -0.0099 & -0.0099 & 100.0\% \\
 & 10 (10.0\%) & -0.0127 & -0.0127 & 90.0\% & -0.0127 & -0.0127 & 90.0\% & -0.0127 & -0.0127 & 90.0\% & -0.0127 & -0.0127 & 90.0\% \\
\hline
\multirow{3}{*}{500} & 5 (1.0\%) & -0.0103 & -0.0103 & 90.0\% & -0.0103 & -0.0103 & 90.0\% & -0.0103 & -0.0103 & 90.0\% & -0.0103 & -0.0103 & 90.0\% \\
 & 25 (5.0\%) & -0.0234 & -0.0196 & 70.0\% & -0.0234 & -0.0196 & 50.0\% & -0.0234 & -0.0196 & 40.0\% & -0.0234 & -0.0196 & 40.0\% \\
 & 50 (10.0\%) & -0.0366 & -0.0335 & 70.0\% & -0.0366 & -0.0335 & 70.0\% & -0.0366 & -0.0335 & 70.0\% & -0.0366 & -0.0335 & 70.0\% \\
\hline
\multirow{3}{*}{1500} & 15 (1.0\%) & -0.0181 & -0.0181 & 10.0\% & -0.0181 & -0.0181 & 10.0\% & -0.0181 & -0.0181 & 10.0\% & -0.0181 & -0.0181 & 10.0\% \\
 & 75 (5.0\%) & -0.0572 & -0.0572 & 60.0\% & -0.0572 & -0.0572 & 60.0\% & -0.0572 & -0.0572 & 60.0\% & -0.0572 & -0.0572 & 60.0\% \\
 & 150 (10.0\%) & -0.0959 & -0.0959 & 50.0\% & -0.0959 & -0.0959 & 50.0\% & -0.0959 & -0.0959 & 50.0\% & -0.0959 & -0.0959 & 50.0\% \\
\hline
\multirow{3}{*}{3000} & 30 (1.0\%) & -0.0280 & -0.0277 & 10.0\% & -0.0304 & -0.0277 & 10.0\% & -0.0304 & -0.0277 & 10.0\% & -0.0304 & -0.0277 & 10.0\% \\
 & 150 (5.0\%) & -0.1046 & -0.1046 & 30.0\% & -0.1046 & -0.1046 & 20.0\% & -0.1046 & -0.1046 & 20.0\% & -0.1046 & -0.1046 & 20.0\% \\
 & 300 (10.0\%) & -0.1805 & -0.1805 & 90.0\% & -0.1805 & -0.1805 & 90.0\% & -0.1805 & -0.1805 & 90.0\% & -0.1805 & -0.1805 & 90.0\% \\
\bottomrule
\end{tabular}
\end{adjustbox}
\end{table}

\begin{table}[h!]
\centering
\footnotesize
\setlength{\tabcolsep}{3pt}
\renewcommand{\arraystretch}{1.3}
\caption{Performance Comparison: $\kappa=1e3$, $[l,u]=[-0.3,1]$}
\begin{adjustbox}{max width = \textwidth}
\begin{tabular}{ccrrrrrrrrrrrr}
\toprule
 & & \multicolumn{3}{c}{Time (60s)} & \multicolumn{3}{c}{Time (180s)} & \multicolumn{3}{c}{Time (300s)} & \multicolumn{3}{c}{Time (600s)} \\
p & k & DASH & Gurobi & Win\% & DASH & Gurobi & Win\% & DASH & Gurobi & Win\% & DASH & Gurobi & Win\% \\
\midrule
\multirow{3}{*}{100} & 3 (3.0\%) & -0.0608 & -0.0610 & 60.0\% & -0.0608 & -0.0610 & 60.0\% & -0.0608 & -0.0610 & 60.0\% & -0.0608 & -0.0610 & 60.0\% \\
 & 5 (5.0\%) & -0.0771 & -0.0772 & 70.0\% & -0.0771 & -0.0772 & 70.0\% & -0.0771 & -0.0772 & 70.0\% & -0.0771 & -0.0772 & 70.0\% \\
 & 10 (10.0\%) & -0.1097 & -0.1097 & 80.0\% & -0.1097 & -0.1097 & 80.0\% & -0.1097 & -0.1097 & 80.0\% & -0.1097 & -0.1097 & 80.0\% \\
\hline
\multirow{3}{*}{500} & 5 (1.0\%) & -0.0646 & -0.0648 & 50.0\% & -0.0646 & -0.0648 & 60.0\% & -0.0646 & -0.0648 & 60.0\% & -0.0646 & -0.0648 & 60.0\% \\
 & 25 (5.0\%) & -0.1623 & -0.1627 & 0.0\% & -0.1623 & -0.1627 & 0.0\% & -0.1623 & -0.1627 & 0.0\% & -0.1623 & -0.1627 & 0.0\% \\
 & 50 (10.0\%) & -0.2693 & -0.2695 & 50.0\% & -0.2693 & -0.2695 & 40.0\% & -0.2693 & -0.2695 & 40.0\% & -0.2693 & -0.2695 & 40.0\% \\
\hline
\multirow{3}{*}{1500} & 15 (1.0\%) & -0.1091 & -0.1100 & 0.0\% & -0.1091 & -0.1100 & 0.0\% & -0.1091 & -0.1100 & 0.0\% & -0.1091 & -0.1100 & 0.0\% \\
 & 75 (5.0\%) & -0.3643 & -0.3644 & 40.0\% & -0.3643 & -0.3649 & 20.0\% & -0.3643 & -0.3650 & 20.0\% & -0.3645 & -0.3651 & 20.0\% \\
 & 150 (10.0\%) & -0.6479 & -0.6472 & 80.0\% & -0.6480 & -0.6480 & 40.0\% & -0.6480 & -0.6482 & 40.0\% & -0.6481 & -0.6483 & 40.0\% \\
\hline
\multirow{3}{*}{3000} & 30 (1.0\%) & -0.1705 & -0.1719 & 10.0\% & -0.1706 & -0.1721 & 10.0\% & -0.1706 & -0.1721 & 10.0\% & -0.1706 & -0.1723 & 0.0\% \\
 & 150 (5.0\%) & -0.6496 & -0.6437 & 100.0\% & -0.6505 & -0.6490 & 70.0\% & -0.6507 & -0.6502 & 60.0\% & -0.6508 & -0.6505 & 50.0\% \\
 & 300 (10.0\%) & -1.1759 & -1.1367 & 100.0\% & -1.1813 & -1.1764 & 100.0\% & -1.1815 & -1.1788 & 90.0\% & -1.1817 & -1.1807 & 60.0\% \\
\bottomrule
\end{tabular}
\end{adjustbox}
\end{table}

\section{Preliminary Definitions}
\begin{definition}[preorder and preordered set]
    A binary relation $R$ on a set $X$ is called a preorder if it is reflexive and transive. That is, if it satisfies:
    \begin{enumerate}
        \item Reflexivity: $aRa$ for all $a\in X$, and
        \item Transitivity: if $aRb$ and $bRc$, then $aRc$ for all $a,b,c\in X$.
    \end{enumerate}
    A set that is equipped with a preoder is called a preordered set
\end{definition}
\begin{definition}
    A total preorder on a set $X$ is a preorder in which any two elements are comparable. That is, if it satisfies:
    \begin{enumerate}
        \item Transitivity: For all $x,y,z\in X$, if $xRy$ and $yRz$, then $xRz$
        \item Strong connectedness: For all $x,y\in X$, $xRy$ or $yRx$
        \item Reflexivity: implied by strong connectedness.
    \end{enumerate}
\end{definition}

\end{document}